\newcommand{\omt}[1]{}
\begin{document}


\title{Improving Resource Location with Locally Precomputed Partial Random Walks\thanks{This research was supported in part by Comunidad de Madrid grant S2009TIC-1692, Spanish MICINN grant TEC2011-29688-C02-01, Spanish MEC grant TIN2011-28347-C02-01 and Bancaixa grant P11B2010-28.}}


\author{V\'{\i}ctor M. L\'opez Mill\'{a}n\inst{1} \and
        Vicent Cholvi\inst{2} \and 
        Luis L\'opez\inst{3} \and\\
        Antonio Fern\'andez Anta\inst{4}}
        
\institute{Universidad CEU San Pablo, Spain 
\email{vmlopez.eps@ceu.es}
\and
Universitat Jaume~I, Spain
\email{vcholvi@uji.es}
\and
Universidad Rey Juan Carlos, Spain
\email{llopez@gsyc.es}
\and
Institute IMDEA Networks, Spain
\email{antonio.fernandez@imdea.org}
}

\maketitle

\begin{abstract}
\omt{ 
Random walks are useful in many applications on complex networks, like, for instance, searching a network for a desired resource. 
In that case, their simplicity has the drawback that searches may need many hops to find the resource.
Variants of the simple random walk have been suggested to improve search performance. In this work, we propose a network search mechanism based on building random walks by connecting together partial walks that have been precomputed at each network node. In an initial stage, each node computes a number of partial walks, associating each of them with a 
data structure containing the set of resources found in the nodes of that partial walk. We assume these data structures to be Bloom filters, which are very compact but can make false positive mistakes. When a node starts a search, it chooses one of the partial walks of the node and queries its Bloom filter for the desired resource. If the result is negative, the search jumps to the last node of that partial walk, repeating the process. If the result is positive, the search traverses that partial walk checking each node until the resource is found. With this strategy, many hops of the simple random walk are saved because of the jumps over partial walks where we know the resource is not located. However, additional unnecessary hops come from false positives at the Bloom filters.

Two variations of the mechanism just described have been considered, differing in the type of partial walks computed in the initial stage: simple random walks or self-avoiding random walks. Analytical models have been developed to predict the expected search length of these mechanisms. When partial walks are random walks,
the model also provides expressions for the optimal size of the partial walks and the corresponding optimal (shortest) expected search length. We have found that the optimal search length is proportional to the square root of the expected length of searches based on simple random walks, achieving a significant improvement. Further reductions are obtained when partial walks are self-avoiding random walks.

We have performed simulation experiments finding that the predictions of the two models are very close to the experimental data. The experimental study is also used to assess the impact of the number of partial walks precomputed by each network node in the statistical behavior of the lengths of the searches. We have found that with just two partial walks per node the results are similar to those obtained for larger values, which is a significant result regarding the practical implementation of the search mechanism.
}

\omt{
Random walks can be used to search complex networks for a desired resource. To reduce the number of hops necessary to find resources, we propose a search mechanism based on building random walks connecting together partial walks (PW) that have been precomputed at each network node in an initial stage. The resources found in each PW are registered in an associated Bloom filter. Searches can then jump over nodes in which the resource is not located, significantly reducing the search length. However, unnecessary hops may come from false positives at the Bloom filters. Two variations of the mechanism just described have been considered, differing in the type of PW computed in the initial stage: simple random walks or self-avoiding random walks. Analytical models have been developed to predict the expected search length of these mechanisms. When PW are random walks, the model also provides expressions for their optimal size, and the corresponding expected search length. We have found that the optimal search length is proportional to the square root of the expected length of searches based on simple random walks, achieving a significant improvement. Further reductions are obtained when PW are self-avoiding random walks. Simulation experiments are used to validate these predictions and to assess the impact of the number of PW precomputed in each node. We have found that with just two PW per node the results are similar to those obtained for larger values, which is a significant result regarding the practical implementation of the search mechanism.
}

Random walks can be used to search complex networks for a desired resource. To reduce search lengths, we propose a mechanism based on building random walks connecting together partial walks (PW) previously computed at each network node. Resources found in each PW are registered. Searches can then jump over PWs where the resource is not located. However, we assume that perfect recording of resources may be costly, and hence, probabilistic structures like Bloom filters are used. Then, unnecessary hops may come from false positives at the Bloom filters. 
Two variations of this mechanism have been considered, depending on whether we first choose a PW in the current node and then check it for the resource, or we first check all PWs and then choose one. In addition, PWs can be either simple random walks or self-avoiding random walks. Analytical models are provided to predict expected search lengths and other magnitudes of the resulting four mechanisms. Simulation experiments validate these predictions and allow us to compare these techniques with simple random walk searches, finding very large reductions of expected search lengths.

\end{abstract}

\keywords{Random walks, self-avoiding random walks, network search, resource location, search length}


\newcommand{\ER}	{Erd\H{o}s-R\'{e}nyi}

\newcommand{\s}		{\ensuremath{s}}		
\newcommand{\sopt}	{\ensuremath{s_{opt}}}		
\newcommand{\p}		{\ensuremath{p}}		
\newcommand{\W}		{\ensuremath{W}}		
\newcommand{\w}		{\ensuremath{w}}		
\newcommand{\lsave}	{\ensuremath{\overline{l}_s}}	
\newcommand{\lave}	{\ensuremath{\overline{l}}}	
\newcommand{\lopt}	{\ensuremath{\overline{l}_{opt}}} 

\newcommand{\Prv}	{\ensuremath{P}}		
\newcommand{\Pexp}	{\ensuremath{\overline{P}}}	
\newcommand{\Jrv}	{\ensuremath{J}}		
\newcommand{\Jexp}	{\ensuremath{\overline{J}}}	
\newcommand{\Trv}	{\ensuremath{T}}		
\newcommand{\Texp}	{\ensuremath{\overline{T}}}	
\newcommand{\Urv}	{\ensuremath{U}}		
\newcommand{\Uexp}	{\ensuremath{\overline{U}}}	
\newcommand{\Lsrv}	{\ensuremath{L_s}}	
\newcommand{\Lsexp}	{\ensuremath{\overline{L}_s}}		
\newcommand{\Lrv}	{\ensuremath{L}}		
\newcommand{\Lexp}	{\ensuremath{\overline{L}}}		
\newcommand{\Lexpopt}	{\ensuremath{\overline{L}_{opt}}}		

\newcommand{\N}		{\ensuremath{N}}		
\newcommand{\kave}	{\ensuremath{\overline{k}}}	

\newcommand{\Exp}[1]	{\ensuremath{\mathrm{E}[#1]}}
\newcommand{\Prob}[1]	{\ensuremath{\mathrm{P_r}[#1]}}

\newcommand{\Cpw}	{\ensuremath{C_p}}	
\newcommand{\Cs}	{\ensuremath{C_s}}	
\newcommand{\nsearches} {\ensuremath{b}}

\newcommand{\pn} 	{\ensuremath{p_n}}
\newcommand{\ps} 	{\ensuremath{p_{tp}}}
\newcommand{\pf} 	{\ensuremath{p_{fp}}}
\newcommand{\pres} 	{\ensuremath{p_{r}}}

\newcommand{\nk}	{\ensuremath{n_k}}

\section{Introduction}

A \emph{random walk} in a network is a routing mechanism that chooses the next node to visit 
at random among the neighbors of the current node. Random walks have been extensively studied in mathematics,
and have been used in a wide range of applications such as statistic physics, population dynamics, bioinformatics, etc.
When applied to communication networks, random walks 
have had a profound impact on algorithms and complexity theory.
Some of the advantages of random walks  are their simplicity, their small processing power consumption at the nodes, and the fact that they need only local information, avoiding the communication overhead necessary in other routing mechanisms.
An important application of random walks has been the search for resources held in the nodes of a network, also known as the \emph{resource location problem}.
Roughly speaking, the problem consists of finding a node that holds the resource, starting at some \emph{source node}. Random walks can be used to perform such a search as follows. It is checked first if the source node holds the resource. If it does not, the search hops to a random neighbor, that repeats the process. The search proceeds through the network in this way until a node that holds the resource is found.
Due to the random nature of the walk, some nodes may be visited more than once (unnecessarily from the search standpoint), while other nodes may remain unvisited for a long time. The number of hops taken to find the resource is called the \emph{search length} of that walk. The performance of this direct application of random walks to network search has been studied in~\cite{rw:Adamic01,rw:Lv02b,rw:Yang05,rw:Gkantsidis06,rw:Rodero10}.

The use of random walks for resource location has several clear applications, like unstructured peer-to-peer (P2P) file sharing systems or content-centric networks (CCN) \cite{DBLP:journals/cacm/JacobsonSTPBB12}. 
The latter are networks in which the key elements are named content chunks, which are requested by users using the content name. Content chunks have to be efficiently located and transferred to be consumed by the user. The techniques described in this paper could be used in the context of CCN to locate content chunks.

%
%

\paragraph{Contributions}

This paper proposes an application to resource location of the technique of concatenating partial walks (PW) available at each node to build random walks. A PW is a precomputed random walk of fixed length. Two variations are considered, depending on whether the search mechanism first randomly chooses one of the PWs in the current node and then checks its associated information for the desired resource, or it first checks all PWs in the node and then randomly chooses among those with a positive result. Both of these variations may use PWs that are simple random walks (RW) or self-avoiding random-walks (SAW), resulting in four mechanisms referred to as \emph{choose-first} PW-RW or PW-SAW, and \emph{check-first} PW-RW or PW-SAW, respectively.
Our mechanisms assume the use of Bloom filters~\cite{nets:Broder04} to efficiently store the set of resources (not their owners) held by the nodes in each partial walk. The compactness of Bloom filters comes at the price of possible \emph{false positives} when checking if a given resource is in the partial walk. False positives occur with a probability $\p$, which is taken into account in our analyses.
These assumptions provide generality to our model, since a probability of $\p=0$ models the case in which the full list of resources found are stored (instead of using a Bloom filter).

We provide an analytical model for the choose-first PW-RW technique, with
expressions for the \emph{expected search length}, the \emph{optimal length of the partial walks}, and for the \emph{optimal expected search length}. We found that, when the probability of false positives in Bloom filters is small, the optimal expected search length is proportional to the square root of the expected search length achieved by simple random walks, in agreement with the results in~\cite{rw:DasSarma13}. Another interesting finding is that the optimal length of the partial walks does not depend on the probability of false positives of the Bloom filters. 
We also provide analytical models for the choose-first PW-SAW mechanism as well as for the check-first variations, which predict their expected search length. 
Then, the predictions of the models are validated by simulation experiments in three types of randomly built networks: regular, \ER, and scale-free. These experiments are also used to compare the performance of the four mechanisms, and to investigate the influence of parameters as the false positive probability and the number of partial walks per node. 
Finally, we have compared the performance of the four search mechanisms with respect to simple random walk searches. For choose-first PW-RW we have found a reduction in the average search length ranging from around $98\%$ to $88\%$. For choose-first PW-SAW such a reduction is even bigger, ranging from $12\%$ to $5\%$ with respect to PW-RW. Check-first PW-RW and PW-SAW can achieve still larger reductions increasing the number of PWs available at each node.

\paragraph{Related Work.}


Das Sarma et al.~\cite{rw:DasSarma13} proposed a distributed algorithm to obtain a random walk of a specified length $\ell$ in a number of rounds\footnote{A \emph{round} is a unit of discrete time in which every node is allowed to send a message to one of its neighbors. According to this definition, a simple random walk of length $\ell$ would then take $\ell$ rounds to be computed.} proportional to $\sqrt{\ell}$. In the first phase, every node in the network prepares a number of \emph{short (random) walks} departing from itself. The second phase takes place when a random walk of a given length starting from a given source node is requested. One of the short walks of the source node is randomly chosen to be the first part of the requested random walk. Then, the last node of that short walk is processed. One of its short walks is randomly chosen, and it is \emph{connected} to the previous short walk. The process continues until the desired length is reached.


Hieungmany and Shioda~\cite{rw:Hieungmany10} proposed a \emph{random-walk-based} file search for P2P networks. A search is conducted along the concatenation of hop-limited shortest path trees.  To find a file, a node first checks its \emph{file list} (i.e., an index of files owned by neighbor nodes). If the requested file is found in the list, the node sends the file request message to the file owner. Otherwise, it randomly selects a leaf node of the hop-limited shortest path tree, and the search follows that path, checking the \emph{file list} of each node in it.


The use of partial random walks in resource location has been proposed in~\cite{rw:LopezMillan12_TADDS} for networks with dynamic resoures. Our work in this paper incorporates efficient storage by means of Bloom filters, in the context of static resources. The use of SAWs as PWs is also proposed and compared with simple RWs.

\paragraph{Structure.} The next section presents a model for the four search mechanisms proposed. Then, the choose-first PW-RW is evaluated in Section~\ref{sec:choose-first_PWRW}. For the sake of clarity, the choose-first PW-SAW mechanism is covered separately in Section~\ref{sec:lave_SAW}, which includes the corresponding analysis together with performance results. Similarly, the check-first PW-RW/PW-SAW mechanisms are presented in Section~\ref{sec:check-first}. 

\section{Model}
\label{sec:analytical}



Let us consider a randomly built network of $N$ nodes
and arbitrary topology, whose nodes hold resources randomly placed in them. Resources are unique, i.e., there is a single instance of each resource in the network. The resource location problem is defined as visiting the node that holds the resource, starting from a certain node (the \emph{source} node). For each search, the source node is chosen uniformly at random among all nodes in the network. 

The search mechanisms proposed in this paper exploit the idea of efficiently building \emph{total random walks} from \emph{partial random walks} available at each node of the network. This process comprises two stages:

\paragraph{(1) Partial walks construction.} Every node $i$ in the network precomputes a set $\W_i$ of \w\ random walks in an initial stage before the searches take place. Each of these partial walks has length \s, starting at $i$ and finishing at a node reached after \s\ hops. In the PW-RW mechanism, the partial walks computed in this stage are simple random walks.
During the computation of each partial walk in $\W_i$, node $i$ registers the resources held by the \s\ first nodes in the partial walk (from $i$ to the one before the last node). As mentioned, for generality, we assume that the resources found are stored in a Bloom filter. 
This information will be used in Stage 2.
Bloom filters are space-efficient randomized data structures to store sets, supporting membership queries. Thus, the Bloom filter of a partial walk can be queried for a given resource. If the result is negative, the resource is not in any of the nodes of the partial walk. If the result is positive, the resource is in one of the nodes of the partial walk, unless the result was a \emph{false positive}, which occurs with a certain probability \p.\footnote{More concretely, \p\ is the probability of obtaining a positive result conditioned on the desired resource not being in the filter.} The size of the Bloom filters can be designed for a target (small) \p\ considered appropriate.
A variation of the partial walk construction mechanism consists of using PWs that are \emph{self-avoiding} walks (SAW). The resulting mechanism, called PW-SAW, is 
analyzed in Section~\ref{sec:lave_SAW}.

\paragraph{(2) The searches.} After the PWs are constructed, searches are performed in the following fashion when the choose-first PW-RW/PW-SAW mechanisms are used. When a search starts at a node $A$, a PW in $\W_A$ is chosen uniformly at random. Its Bloom filter is then queried for the desired resource. If the result is negative, the search \emph{jumps} to node $B$, the last node of that partial walk. 
The process is then repeated at $B$, so that the search 
keeps jumping in this way while the results of the queries are negative. When at a node $C$, the query to the Bloom filter (of the PW randomly chosen from $\W_C$) gives a positive result, the search \emph{traverses} that partial walk looking for the resource
until the resource is found or the partial walk is finished. If the resource is found, the search stops. If the search reaches the last node $D$ of the partial walk without having found the resource in the previous nodes, it means that the result of the Bloom filter query was a false positive. The search then randomly chooses a partial walk in $\W_D$ and decides whether to jump over it or to traverse it depending on the result of the query to its Bloom filter, as described above. 
A variation of this behavior consists of first checking all PWs of the node for the desired resource, and then randomly choosing among the ones with a positive result. The resulting mechanisms, called check-first PW-RW/PW-SAW are analyzed in Section~\ref{sec:check-first}.
\\[0.1cm]



In this work, we are interested in the number of \emph{hops} to find a resource (when PWs of length \s\ are used), which is defined as the \emph{search length} and denoted \Lsrv.
Some of these hops are \emph{jumps} (over PWs) and other are \emph{steps} (traversing PWs). In turn, we distinguish between \emph{trailing steps}, if they are the ones taken when the resource is found, and \emph{unnecessary steps}, if they are taken when the resource is not found.
The search length is a random variable that takes different values when independent searches are performed. The \emph{search length distribution} is defined as the probability distribution of the search length random variable. We are interested in finding the \emph{expected search length}, denoted \Lsexp. Figure~\ref{fig:concepts} summarizes the behavior of the search mechanisms.

At this point, we emphasize the difference between the \emph{search} just defined and the \emph{total walk} that supports it, consisting of the concatenation of \emph{partial walks} as defined above. Searches are shorter in length than their corresponding total walks because of the number of steps saved in jumps over partial walks in which we know that the resource is not located (although these saving may be reduced by the unnecessary steps due to Bloom filter false positives).

\begin{figure}
 \centering
 \includegraphics[width=10cm]{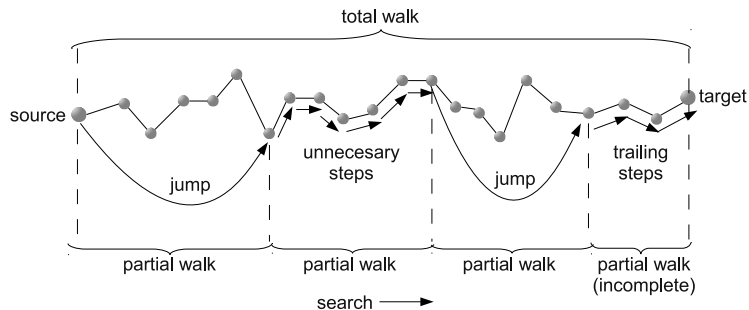}
 \caption{An example of search, using PWs of length $\s=6$.}
 \label{fig:concepts}
\end{figure} 

\section{Choose-First PW-RW}
\label{sec:choose-first_PWRW}

\subsection{Analysis of Choose-First PW-RW}
\label{sec:lave_RW}

We make an additional assumption in order to simplify this analysis. Once a PW has been used in the total walk of a search, it is never reused again in that total walk or in any other searches. 
Thus we guarantee that the total walks are true random walks. 
This implies that in practice each node needs to have a large number of precomputed partial walks (\w), assumption that would compromise the benefits of the proposed mechanism in practice. Simulations in Section~\ref{sec:perf_eval} show that real cases with small \w\ behave very similarly to the base case provided by this analysis.

Let \Lsrv\ be the random variable representing the number of hops in the search (i.e., its length) when PWs of length \s\ are used. The expected search length is denoted by \Lsexp. Let \Lrv\ be the random variable representing the number of hops of the corresponding total walk. Its expected search length is denoted \Lexp. Making use of the assumption that partial walks are never reused, \Lrv\ can be viewed as the length of a search based on a simple random walk in the considered network, and \Lexp\ as the expected search length of random walks in that network. Then, we can state the following theorem: 



\begin{theorem}
\label{t-expectedL}
If the expected number of trailing steps is assumed to be uniformly distributed in $[0,s-1]$\footnote{This is, in fact, a pessimistic assumption. The distribution of trailing steps is approximately uniform, but shorter walks have a slightly higher probability than longer ones. This can be shown analytically and has been confirmed in our experiments (see Appendix~\ref{ssec:distr_tsteps}). Therefore, the expected value in our analysis, derived from a perfectly uniform distribution, is slightly higher than the real average value.}, then the expected search length is:
\begin{equation}
\Lsexp = \left(\frac{\s}{2} + \frac{2\Lexp +1}{2\s} - 1 \right)\cdot (1-\p) + \Lexp\cdot p.
\label{eq:Lsexp_p}
\end{equation}
\end{theorem}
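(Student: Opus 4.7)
My plan is to express $\Lsrv$ as a sum of three natural pieces and take expectations by conditioning on the length of the underlying total walk. Write $\Lsrv = \Jrv + \s\,\Urv + \Trv$, where $\Jrv$ is the number of jumps (each adding one hop to the search), $\Urv$ is the number of unnecessary traversals triggered by Bloom-filter false positives (each adding $\s$ hops, since the entire PW is walked through to its last node), and $\Trv$ is the number of trailing steps taken inside the PW that actually contains the resource. The conceptually delicate point, which I would verify carefully at the outset, is that a jump and an unnecessary traversal of a given PW advance the underlying total walk by the same $\s$ positions but incur different search-length costs (one versus $\s$ hops); the decomposition must therefore track search hops rather than total-walk progress.

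The key geometric observation is that the total walk, viewed as the concatenation of PWs of length $\s$, places the resource at position $\Lrv$, so the resource lies inside the $(\lfloor \Lrv/\s \rfloor + 1)$-th PW, preceded by exactly $N := \lfloor \Lrv/\s \rfloor$ PWs. Because Bloom filters have no false negatives, the PW containing the resource is certainly entered and contributes $\Trv$ hops; each of the $N$ preceding PWs independently returns a correct negative with probability $1-\p$ (yielding a jump) and a false positive with probability $\p$ (yielding an unnecessary traversal). Conditioning on $N$ therefore gives $\Exp{\Jrv + \s\,\Urv \mid N} = N(1-\p) + N\s\p = N\bigl(1 + \p(\s-1)\bigr)$.

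The remainder is algebra using the uniformity hypothesis. Since $\Trv = \Lrv - N\s$ takes values in $\{0, 1, \dots, \s-1\}$ uniformly by assumption, $\Exp{\Trv} = (\s-1)/2$; combined with $\Lrv = N\s + \Trv$, this yields $\Exp{N} = (\Lexp - (\s-1)/2)/\s$. Substituting into $\Lsexp = \Exp{N}\bigl(1 + \p(\s-1)\bigr) + \Exp{\Trv}$ and using $(\s-1)^2/(2\s) = \s/2 - 1 + 1/(2\s)$, the $\p=0$ contribution collapses to $\s/2 + (2\Lexp+1)/(2\s) - 1$, while the $\p$-dependent terms regroup as $\p\Lexp$ minus the same bracket, producing (\ref{eq:Lsexp_p}). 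Beyond this routine bookkeeping no real obstacle arises; the main intellectual content lies entirely in the decomposition of the first paragraph.
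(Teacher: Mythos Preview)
Your argument is correct and is essentially the paper's own proof: you decompose $\Lsrv$ into jumps, unnecessary steps, and trailing steps, use that each of the $N=\lfloor \Lrv/\s\rfloor$ preceding PWs is jumped with probability $1-\p$ and traversed with probability $\p$, compute $\Exp{N}=(\Lexp-(\s-1)/2)/\s$ from the uniformity hypothesis, and simplify. The only cosmetic difference is that the paper lets $\Urv$ count unnecessary \emph{steps} (so $\Uexp=\Pexp\cdot\p\cdot\s$) whereas you let it count unnecessary \emph{traversals} and write $\s\,\Urv$ explicitly; your $N$ is the paper's $\Prv$.
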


\begin{proof}
Let \Prv, \Jrv, \Urv\ and \Trv\ be random variables representing the number of partial walks, jumps, unnecessary steps and trailing steps in a search, respectively. Their expectations are denoted as \Pexp, \Jexp, \Uexp\ and \Texp.
Since hops in a search can be jumps, unnecessary steps or trailing steps, it follows that,
$\Lsrv = \Jrv + \Urv + \Trv.$
Then, the expected search length for partial walks of size \s\ is\footnote{In the following, we make implicit use of the linearity properties of expectations of random variables.}
$\Lsexp = \Jexp + \Uexp + \Texp.$

The expected number of jumps can be obtained from the expected number of partial walks in the search (\Pexp) and from the probability of false positive (\p) as
$\Jexp = \Pexp\cdot(1-p),$
since \Jrv\ follows a binomial distribution $B(\Prv,1-\p)$, where the number of experiments is the random variable representing the number of partial walks in a search (\Prv) and the success probability is the probability of obtaining a negative result in a Bloom filter query ($1-\p$).\footnote{If $Y$ is a random variable with a binomial distribution with success probability $p$, in which the number of experiments is in turn the random variable $X$, it can be easily shown that $\overline{Y} = \overline{X}\cdot p$ (see Appendix~\ref{sec:demo_exp}).} 

For the expected number of unnecessary steps,
$\Uexp = \Pexp\cdot \p \cdot \s,$
since $\Pexp \cdot \p$ is the expected number of false positives in the search and each of them contributes with \s\ unnecesary steps.
The number of partial walks in a search can be obtained dividing the length of the total walk by the size of a partial walk: $\Prv = \left\lfloor \frac{\Lrv}{\s} \right\rfloor = \frac{\Lrv - \Trv}{\s}$. Then, the expected number of partial walks in a search is
$\Pexp = \frac{\Lexp - \Texp}{\s}.$

Since we assume that the expected number of trailing steps is uniformly distributed between $0$ and ($\s-1$), its expectation is
$\Texp = \frac{\s-1}{2}.$

Using the previous equations 
we have:
\begin{equation}
 \Lsexp = \left(\frac{\s}{2} + \frac{2\Lexp +1}{2\s} - 1 \right) + \p\cdot\left( \Lexp - \left(\frac{\s}{2} + \frac{2\Lexp +1}{2\s} - 1\right) \right),
\end{equation}
where the first term is the expectation of the search length for a ``perfect'' Bloom filter (one that never returns a false positive when the resource is not in the filter, i.e., $\p=0$), and the second term is the expectation of the additional search length due to false positives ($p\neq0$).

Another interpretation of this expression is obtained if we reorganize it to make explicit the contributions of a perfect filter and of a ``broken'' filter (one that always returns a false positive result when the resource is not in the filter, i.e., $\p=1$) as 
\begin{equation}
 \Lsexp = \left(\frac{\s}{2} + \frac{2\Lexp +1}{2\s} - 1 \right)\cdot (1-\p) + \Lexp\cdot p.
\end{equation}

\end{proof}

From this theorem and using calculus, 
we have the following corollary.

\begin{corollary}
The optimal length of the partial walks, i.e., the length of the partial walks that minimizes the expected search length, is:
\begin{equation}
 \sopt = \sqrt{2\Lexp + 1}.
 \label{eq:s_opt}
\end{equation}
\end{corollary}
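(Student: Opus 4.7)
The plan is straightforward single-variable calculus applied to the closed-form expression for $\Lsexp$ established in Theorem~\ref{t-expectedL}. Treating $\Lexp$ and $\p$ as constants and viewing $\Lsexp$ as a function of $\s > 0$, I would differentiate equation~(\ref{eq:Lsexp_p}) with respect to $\s$ and solve the resulting first-order condition.

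Since $\Lexp\cdot\p$ has no $\s$-dependence, only the first summand of~(\ref{eq:Lsexp_p}) contributes to the derivative. A direct computation gives $\frac{d\Lsexp}{d\s} = (1-\p)\bigl(\tfrac{1}{2} - \tfrac{2\Lexp+1}{2\s^2}\bigr)$. Setting this to zero (and assuming $\p < 1$; the case $\p = 1$ is degenerate, since then $\Lsexp$ collapses to $\Lexp$, which is independent of $\s$) yields at once $\s^2 = 2\Lexp + 1$, i.e., the claimed $\sopt = \sqrt{2\Lexp + 1}$. To verify that this critical point is indeed the unique minimizer, I would compute the second derivative $\frac{d^2\Lsexp}{d\s^2} = (1-\p)\cdot\frac{2\Lexp+1}{\s^3}$, which is strictly positive on $\s > 0$ whenever $\p < 1$. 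Hence $\Lsexp$ is strictly convex in $\s$, so the stationary point is the global minimum on the positive reals.

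There is no real technical obstacle; the argument is essentially a one-line optimization. What merits highlighting, and what I would emphasize at the end of the proof, is a structural consequence that the paper itself flags: the factor $(1-\p)$ multiplies the entire $\s$-dependent portion of $\Lsexp$, and so it cancels from the first-order condition. Consequently $\sopt$ does not depend on the false-positive probability $\p$ at all — the optimal partial-walk length is dictated solely by $\Lexp$.
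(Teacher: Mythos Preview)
Your proposal is correct and matches the paper's approach exactly: the paper simply states that the corollary follows ``from this theorem and using calculus,'' and your differentiation of~(\ref{eq:Lsexp_p}), first-order condition, and convexity check are precisely that calculus spelled out. Your observation that the factor $(1-\p)$ cancels from the first-order condition, making $\sopt$ independent of $\p$, is exactly the point the paper highlights immediately after the corollary.
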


\noindent The obtained value needs to be rounded to an integer, which is omitted in the notation.
Observe that \emph{the optimal length of the partial walks is independent from the probability of false positives in the Bloom filters}, while the expected search length (\Lsexp) does of course depend on it.

\begin{corollary}
The optimal expected search length, i.e., the expected search length when partial walks of optimal length are used, is:
\begin{equation}
 \Lexpopt = \left(\sqrt{2\Lexp + 1} - 1\right) \, (1 - \p) + \Lexp\,\p= (\sopt - 1) \, (1 - \p) + \Lexp\,\p.
\label{eq:lopt}
\end{equation}
\end{corollary}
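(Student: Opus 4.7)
The plan is a direct substitution argument. The preceding corollary gives $\sopt = \sqrt{2\Lexp+1}$, and Theorem~\ref{t-expectedL} provides an explicit formula for $\Lsexp$ as a function of $\s$, $\Lexp$, and $\p$. Since $\Lexpopt$ is by definition $\Lsexp$ evaluated at the optimal partial-walk length $\s = \sopt$, I would simply plug $\sopt$ into equation~\eqref{eq:Lsexp_p}.

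The key observation driving the simplification is that $\sopt/2$ and $(2\Lexp+1)/(2\sopt)$ are equal: the former is $\sqrt{2\Lexp+1}/2$ by definition, and the latter equals $(2\Lexp+1)/(2\sqrt{2\Lexp+1}) = \sqrt{2\Lexp+1}/2$ after rationalising. Adding the two contributions yields $\sqrt{2\Lexp+1}$, so the ``perfect-filter'' factor $\s/2 + (2\Lexp+1)/(2\s) - 1$ collapses to $\sqrt{2\Lexp+1} - 1$, which is precisely $\sopt - 1$. The term $\Lexp\cdot\p$ does not involve $\s$ and is unchanged. Substituting these back into the two-term form of equation~\eqref{eq:Lsexp_p} delivers both expressions in the claimed equality in one step.

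There is no genuine obstacle here: once Theorem~\ref{t-expectedL} and the formula for $\sopt$ are available, the result is a one-line algebraic substitution. The only subtlety worth flagging is that $\sopt$ as written is a real number whereas any implemented partial-walk length must be an integer, so the expression for $\Lexpopt$ should be read as the value attained by the continuous minimiser; the rounding caveat already noted for $\sopt$ applies verbatim. It is also worth remarking, as a sanity check, that the formula behaves correctly in the two limiting regimes: when $\p = 0$ it reduces to $\sopt - 1$, as expected for a perfect Bloom filter, and when $\p = 1$ it reduces to $\Lexp$, recovering the length of a plain random-walk search, as expected when the filter gives no information.
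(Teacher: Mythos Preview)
Your proposal is correct and matches the paper's approach: the paper does not spell out a proof for this corollary, treating it as an immediate consequence of substituting $\sopt=\sqrt{2\Lexp+1}$ into the expression for $\Lsexp$ in Theorem~\ref{t-expectedL}, which is precisely what you do. Your observation that the two terms $\sopt/2$ and $(2\Lexp+1)/(2\sopt)$ coincide, together with the sanity checks at $\p=0$ and $\p=1$, makes the argument fully explicit.
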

This result is an interesting relation between the optimal length of the search and the optimal length of the PWs.
If we consider perfect Bloom filters ($\p = 0$), we have $\Lexpopt = \sopt - 1,$
which for large \Lexp\ (e.g. for large networks) becomes $\Lexpopt \approx \sopt$. Therefore, we have found that, for large $N$ and $\p=0$, \emph{the optimal expected search length approximately equals the optimal length of the partial walks. }
For arbitrary values of \p, Equation~\ref{eq:lopt} shows that \Lexpopt\ is linear in \p. 

This completes the analysis of choose-first PW-RW. Appendix~\ref{sec:alt-analysis-PWRW1} provides an alternative analysis using a different approach. Instead of assuming that the total walk is a random walk, it considers that it is built using the $w$ PWs available at each node, which avoids the need of \Lexp. On the other hand, the alternative model does not provide expressions for the optimal PW length or the expected search length.

\subsection{Cost of Precomputing PWs}

Since searches use the partial walks precomputed by each of the nodes of the network, the cost of this computation must be taken into account. We measure this cost as the number of messages \Cpw\ that need to be sent to compute all the PWs in the network. This quantity has been chosen to be consistent with our measure of the performance of the searches. Indeed, each \emph{hop} taken by a search can be alternatively considered as a \emph{message} sent. In addition, $\Cpw$ is independent from other factors like the processing power of nodes, the bandwidth of links and the load of the network.
The cost of precomputing a set of PWs can be simply obtained as $\Cpw = \N \w (\s+1)$, since each of the \N\ nodes in the network computes \w\ partial walks, sending \s\ messages to build each of them plus one extra message to get back to its source node. 

Let's suppose that each node starts on the average \nsearches\ searches that are processed by the network with the set of PWs precomputed initially. We define \Cs\ to be the total number of messages needed to complete those searches. If the expected number of messages of a search is $\Lsexp+1$ (counting the message to get back to the source node), we have that $\Cs = N \nsearches (\Lsexp+1)$. Now, defining $C_t$ as the \emph{average total cost per search}, we can write:

\begin{equation}
 C_t = \frac{\Cs+\Cpw}{N\nsearches} = (\Lsexp+1) + \frac{w}{b}(\s+1).
 \label{eq:Ct}
\end{equation}

The second term in Equation~\ref{eq:Ct} is the contribution to the cost of the precomputation of the PWs. This contribution will remain small provided that the number of searches per node in the interval is large enough.

\omt{
Since searches use the partial walks precomputed by each of the nodes of the network, the cost of this computation must be taken into account. We measure this cost as the number of messages \Cpw\ that need to be sent to compute all the partial walks in the network. Observe that $\Cpw$ is independent from other factors like the processing power of nodes, the bandwidth of links and the load of the network. This number can be simply obtained as \mbox{$\Cpw = \N \cdot \w \cdot (\sopt+1)$}, since each of the \N\ nodes in the network computes \w\ partial walks, sending \sopt\ messages to build each of them plus one extra message to get back to its source node. Note that we are assuming here that partial walks of optimal size, as defined in Equation~\ref{eq:s_opt}, are used.

We can compare the cost of precomputing random walks (i.e., $\Cpw$) with the expected cost of searches themselves \Cs, which is defined as the number of messages needed to perform them.

Let us suppose that each node starts \nsearches\ searches that are processed by the network with the set of partial walks precomputed initially. When using optimal length partial walks, searches have an expected length \Lexpopt. Since an extra message is needed to report the search success to the starting node, the total number of messages can be written as $\Cs = N \cdot \nsearches \cdot (\Lexpopt+1)$. For large networks and low values of \p, we have that $\Lexpopt \approx \sopt$ (see Corollary~\ref{c_lopt}). Therefore, $\Cs \approx N \cdot \nsearches \cdot (\sopt+1)$.

Now, we compare the cost of precomputing random walks with the cost of searches themselves simply by obtaining $\Cpw / \Cs \approx \w / \nsearches$. This relative cost can be made as low as desired by setting the number of searches \nsearches\ to a value large enough with respect to the number of partial walks per node, \w.

Finally, we do note that the repetition of the partial random walk construction (stage 1) could overlap in time with the searches (stage 2), and that the partial walks of a node (and of all nodes) could also be precomputed in parallel.

\omt{
Although some messages can clearly be sent in parallel to reduce the computation time ------ Additional result: the number of times that a partial walk is used (a kind of \emph{reutilization factor}) is: $n = b/w \cdot (\sopt -1)/2 \cdot (1-\p)$. Taking our simulation experiments as an example (a regular network with $N=10^4$ and $\sopt = 150$ in which $10^6$ searches are performed (thus $\nsearches=100$), with $\w=2$, and setting $\p=0$): $n=3725$.
}
}

\subsection{Performance Evaluation}
\label{sec:perf_eval}

The goal of this section is to apply the model for choose-first PW-RW presented in the previous section to real networks, and to validate its predictions with data obtained from simulations. Three types of networks have been chosen for the experiments: regular networks (constant node degree), \ER\ (ER) networks and scale-free networks (with power law on the node degree). A network of each type and size $N=10^4$ has been randomly built with the method proposed by Newman et al.~\cite{nets:Newman01} for networks with arbitrary degree distribution, setting their average node degree to $\kave=10$. Each network is constructed in three steps: (1) a preliminary network is constructed according to its type; (2) its degree distribution is extracted, and (3) the final (random) network is obtained feeding the Newman method with that degree distribution. For each experiment, $10^6$ searches have been performed, with the source node chosen uniformly at random among the $N$ nodes. Likewise, the resource has been placed in a node chosen uniformly at random for each experiment.

\subsubsection{Optimal PW Size and Expected Search Length in Choose-First PW-RW}

We start by applying 
Theorem~\ref{t-expectedL} to the networks described above to obtain the expected search length as a function of the size of the PWs.\footnote{For each network, the expected length of a random walk search (\Lexp) is needed. We estimate these expected values by simulating $10^6$ simple random walk searches and averaging their lengths in each of the networks (these average search lengths are denoted using lowercase ($\overline{l}$) to distinguish them from the actual expected value (\Lexp) in the model. The values obtained from the experiments are: $\overline{l}_{reg} = 11246$, $\overline{l}_{ER} = 12338$, and $\overline{l}_{sf} = 15166$).
These results agree with the approximate analytical method in~\cite{rw:LopezMillan12_Networks} (a modification of the one provided in~\cite{rw:Rodero10}), which produces the following results: $\overline{l}_{reg} = 11095$, $\overline{l}_{ER} = 12191$, and $\overline{l}_{sf} = 14920$. 
}
Figure~\ref{fig:figA}\subref{fig:ls_s} provides plots of the expected search lengths (\Lsexp) given by Equation~\ref{eq:Lsexp_p} as a function of the size of the PWs (\s), when the probability of a false positive in the Bloom filter is set to $\p=0$, for the three types of networks considered. Results from the analytical model are shown as curves while simulation data are shown as points. The curves for the three networks show a minimum point $(\sopt,\Lexpopt)$. This behavior is due to the fact that, when $\s$ is small, the number of jumps needed to reach a PW containing the chosen resource grows, therefore increasing the value of $\Lexp$. In turn, for larger values of $\s$, the number of trailing steps within the last PW grows, also increasing the value of $\Lexp$.



\begin{figure*}[!t]
\centerline{\subfloat[]{\includegraphics[width=7.8cm]{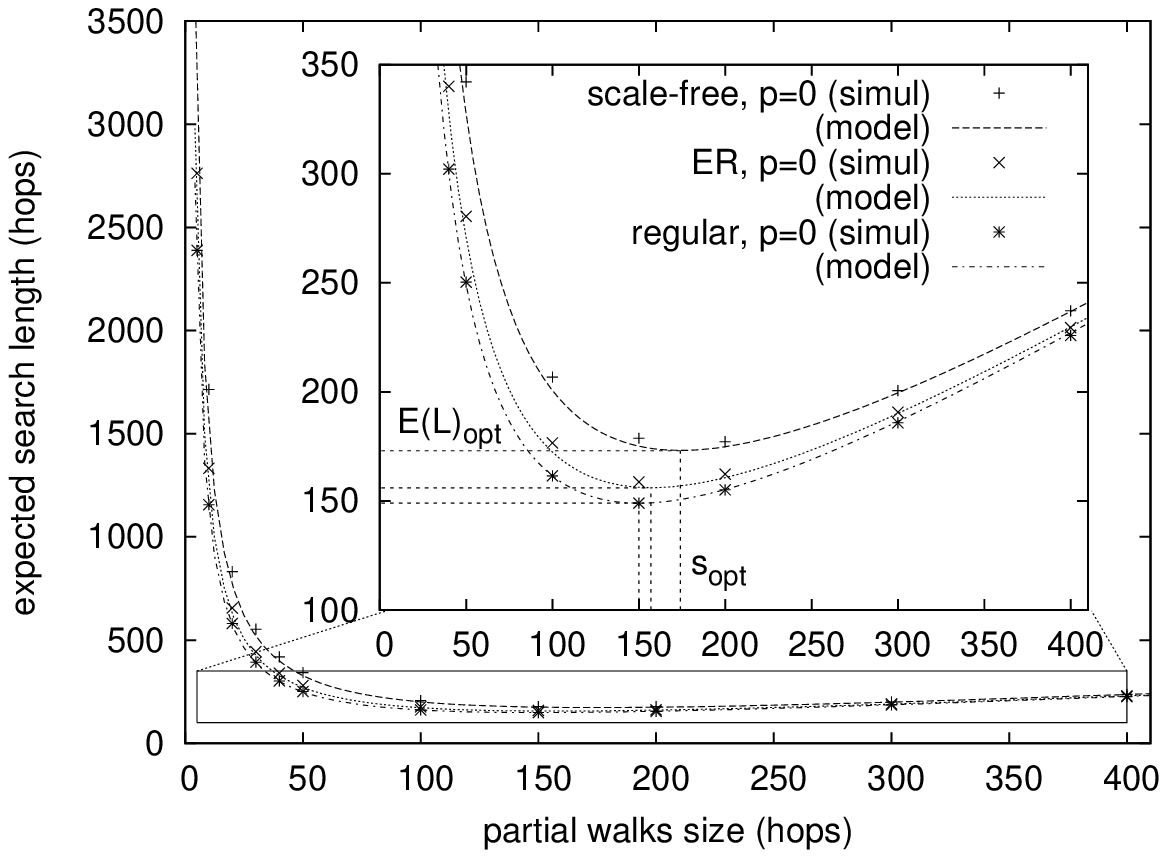} \label{fig:ls_s}} \hfil
\subfloat[]{\includegraphics[width=7.8cm]{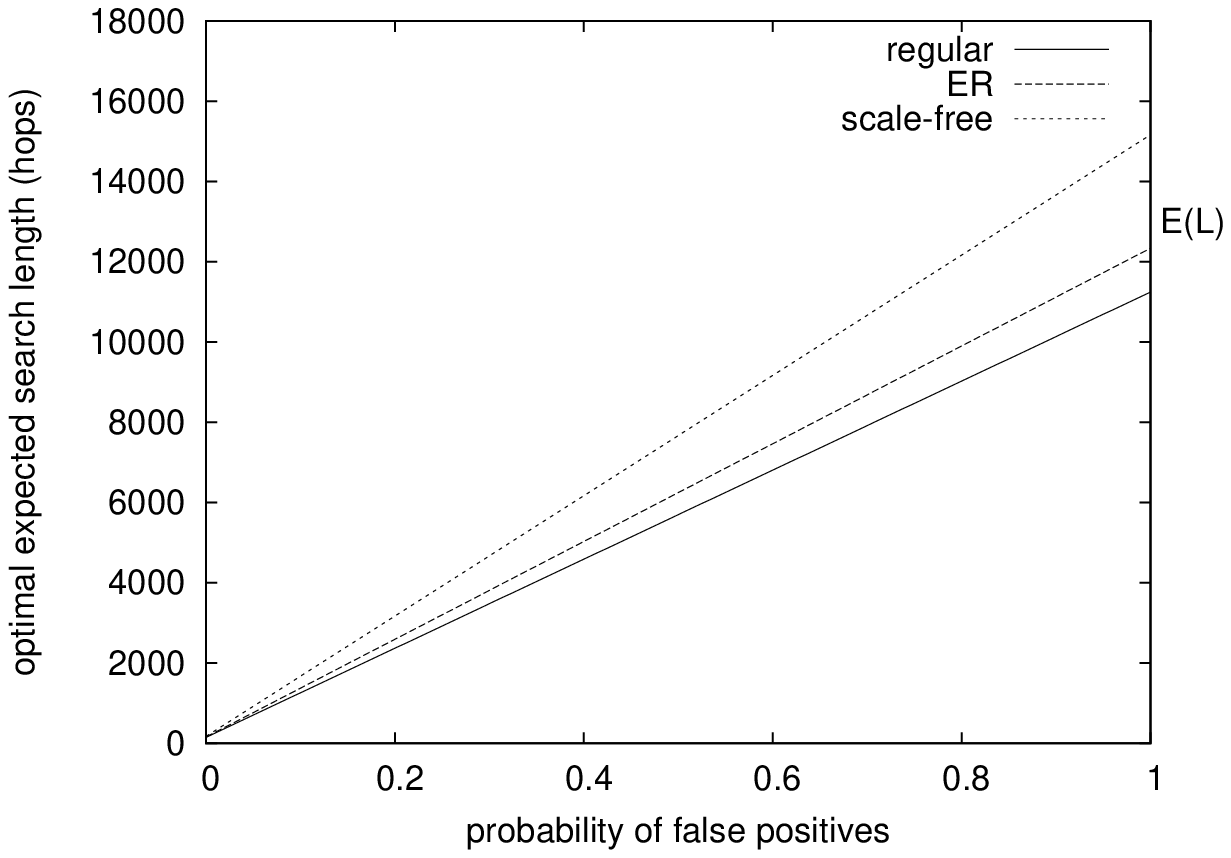} \label{fig:lopt_p}}
}
\caption{(a) Expected search length (\Lsexp) as a function of \s\ when $\p=0$ in a regular network, an ER network and a scale-free network. The optimal points $(\sopt,\Lexpopt)$ for each network are $(150,149)$, $(157,156)$, and $(174,173)$. (b) Optimal expected search length (\Lexpopt) as a function of \p.}
\label{fig:figA}
\end{figure*}

%


Figure~\ref{fig:figA}\subref{fig:lopt_p} illustrates (using Equation~\ref{eq:lopt} and taking into account the fact that \sopt\ is independent from the value of \p) the optimal expected search length (\Lexpopt) as a function of the probability of false positives (\p). It can be seen that it grows linearly: the regular network exhibits the smallest slope, followed by the ER network and then by the scale-free network. For $\p=1$, Equation~\ref{eq:lopt} degenerates to $\Lexpopt = \Lexp$, since the search performs all the hops of the total walk (i.e., it is a random walk). In fact, Equation~\ref{eq:Lsexp_p} also degenerates to $\Lsexp = \Lexp$ in this case, meaning that the expected search length is that of random walk searches regardless the size of the PWs (\s).


\subsubsection{Distributions of Search Lengths in Choose-First PW-RW}
\label{sec:search_length_distr}

The aim of this section is to experimentally explore how the use of PWs affects the statistical distribution of search lengths.

\begin{figure*}[!t]

 
\centerline{
 \subfloat[Search lengths for $\p=0$ and for $\s=\sopt=150$, $\s=50$ and $\s=1000$.]{
 \includegraphics[width=7.8cm]{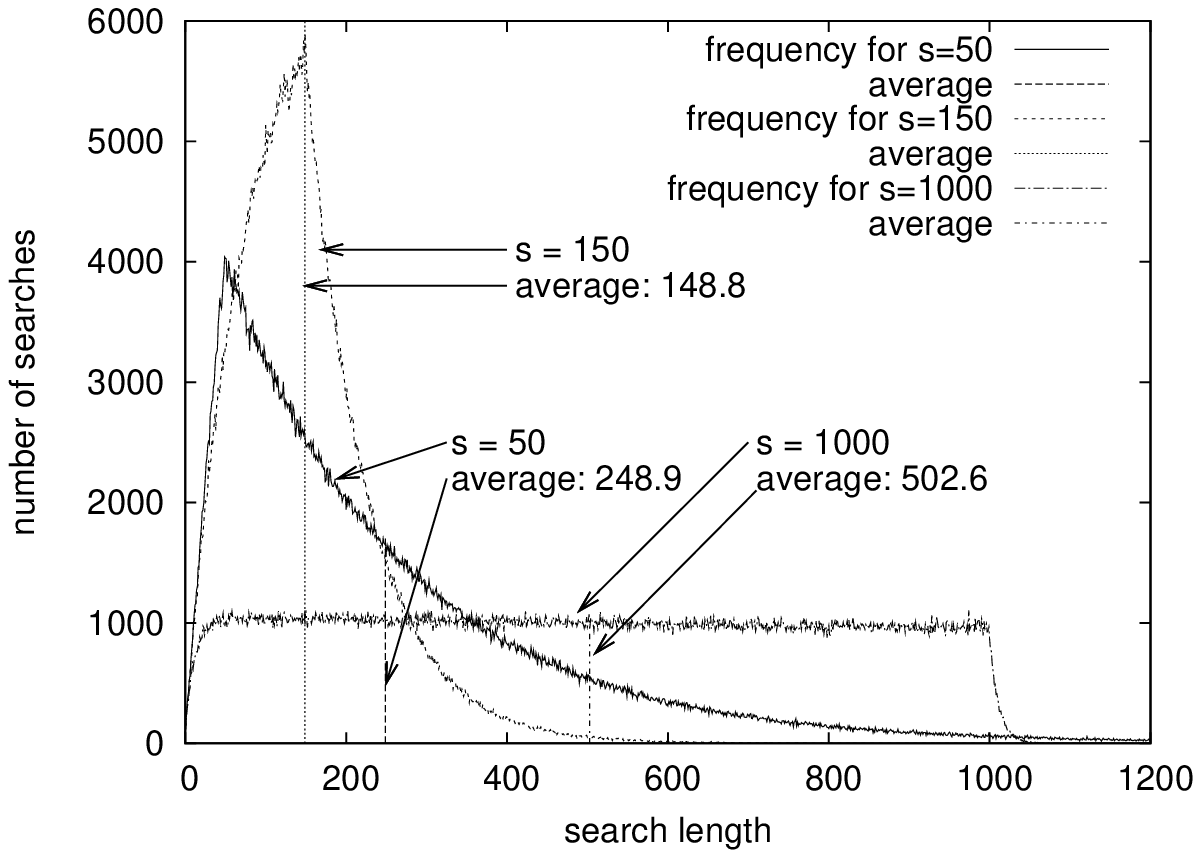}
 \label{fig:ldistr_s10_150_1000_reg} }
\hfil 
 \subfloat[Search lengths for \sopt\ and for $\p=0,0.01,0.1$.]{
 \includegraphics[width=7.8cm]{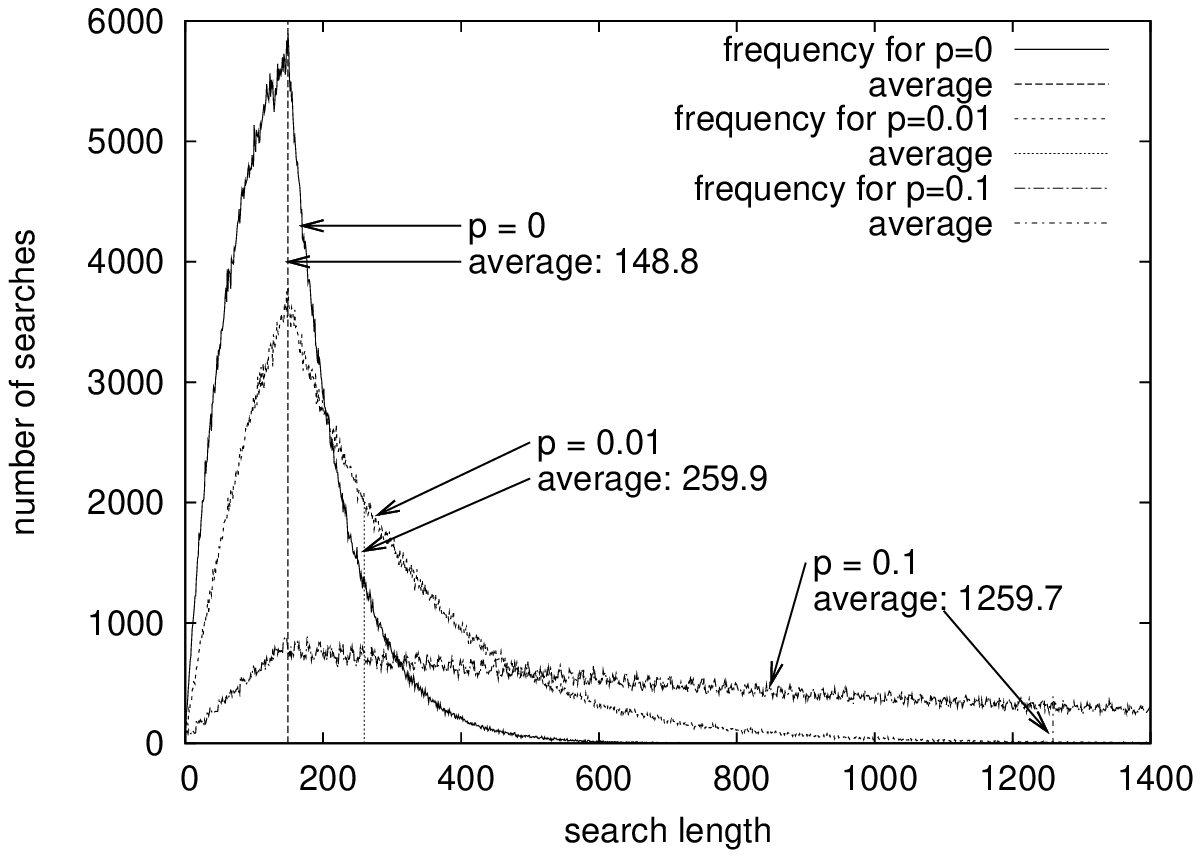}
 \label{fig:ldistr_all_sopt_pall} }
}

 \caption{Distributions of search lengths (histograms) with PWs that are not reused in the regular network.}
 \label{fig:ldistr_reg}
\end{figure*} 



\paragraph{Length distributions.} 
\smallskip

We first obtain the lengths distributions of searches  using PWs that are never reused. Later in this section we will discuss the effect of having a limited number of partial random walks that are reused.
We consider each random walk to be the total walk of a search based on PWs. For each original random walk, we break it in pieces of size \s, which are taken as the PWs that make up the total walk. Then we consider a search that uses those PWs and count the number of hops (jumps plus trailing steps plus unnecessary steps). This gives the length of the search if it had been constructed using those (precomputed) PWs. Note that the PWs are not reused because they are obtained from independent (real) random walks.

The search length distributions in the regular network for $\p=0$ and for several values of \s\ are shown in Figure~\ref{fig:ldistr_reg}\subref{fig:ldistr_s10_150_1000_reg}. The plots also show, as vertical bars, the average search lengths computed from each distribution. These average values are very close to the expected values calculated with Equation~\ref{eq:Lsexp_p} ($\Lexp_{50} = 248.9$, $\Lexp_{150} = 149.0$ and $\Lexp_{1000} = 510.2$).
Therefore, our model accurately predicts average lengths of searches based on PWs of size \s\ in the three types of networks considered in our experiments.

As for the shape of the distributions, we observe that for low \s\ ($\s=50$ in Figure~\ref{fig:ldistr_reg}\subref{fig:ldistr_s10_150_1000_reg}) the search lengths are dominated by the number of jumps, which is proportional to the length of the total walk. On the other hand, for high \s\ ($\s=1000$ in Figure~\ref{fig:ldistr_reg}\subref{fig:ldistr_s10_150_1000_reg}) the distribution adopts a rather uniform shape. Search lengths are dominated here by the number of trailing steps in the last PW, and this has approximately an uniform distribution between 0 and $\s-1$, as mentioned earlier. The optimal length for the PWs, \sopt\ ($\s=150$ in Figure~\ref{fig:ldistr_reg}\subref{fig:ldistr_s10_150_1000_reg}), represents a transition point between these two effects. The shape is such that the values around the average search length (which approximately equals \sopt, according to Equation~\ref{eq:lopt}) are also the most frequent.


Once it has been found the optimal length for the PWs \sopt\ (which is known to be independent of the value of $p$), we investigate the effect of the probability of false positive of Bloom filters in these distributions. Figure~\ref{fig:ldistr_reg}\subref{fig:ldistr_all_sopt_pall} shows the distributions of search lengths (histograms) for the regular network when $\s=\sopt$ and for several values of \p. 
It can be seen that the distributions get wider and lower as \p\ grows, pushing average search lengths to higher values, in accordance with Figure~\ref{fig:figA}\subref{fig:lopt_p}. However, we observe that the most frequent lengths remain the same regardless of the value of \p. For $p=0$, the most frequent value for each network approximately equals the average search length which, in turn, approximately equals the optimal length of the PWs ($\sopt=150$ for the regular network). For greater values of \p, the average search length grows while the most frequent value stays the same.

Regarding the distributions for the ER and the scale-free networks, they have similar shapes and are not shown here. However, we have used these distributions to obtain Table~\ref{tab:H_reductions}\subref{tab:H_reduction_RW_PWRW} (explained below).

\paragraph{Effect of reusing PWs.}
\smallskip

At this point, we note that we have been assuming that PWs are never reused. However, in practical scenarios it seems quite reasonable to consider a limited number of partial random walks that are reused. In Appendix~\ref{search_length_distr_reused_PW} we have explored the distributions of search lengths when the total walks are built reusing a limited number $w$ of PWs precomputed in each node. As it can be readily seen there, we conclude that, for the types of networks in our experiment, just two precomputed PWs per node are enough to obtain searches whose lengths are statistically similar to those that would be obtained with PWs that are not reused. So, we can say that our results using not reused PWs are also valid when using a limited number of PWs that are reused.

\subsubsection{Comparison of performance with respect to random searches.} 
\smallskip

Finally, in Table~\ref{tab:H_reductions}\subref{tab:H_reduction_RW_PWRW} we compare the performance of the proposed search mechanism with respect to random walk searches. We can see that the reduction in the average search length that PW-RW achieves with respect to simple random walk is lower for higher $p$, ranging from around $98\%$ in the case when $p=0$ to $88\%$ when $p=0.1$. Furthermore, we also see that the achieved reductions are independent of the network type.

\begin{table}
\centering
\subfloat[PW-RW with respect to random walk searches]{
\begin{tabular}{lccc} \hline
 & \multicolumn{3}{c}{\rule{0pt}{11pt}Reduction of $\lave$ (\%)} \\ \cline{2-4}
\rule{0pt}{11pt}Network type & $\p=0$ & $\p=0.01$ & $\p=0.1$ \\ \hline 
Regular      & 98.67 & 97.68 & 88.73 \\
ER	     & 98.71 & 97.68 & 88.42 \\
Scale-free   & 98.83 & 97.79 & 88.43 \\ 
\hline
\end{tabular}
\label{tab:H_reduction_RW_PWRW}
} \hfil
\subfloat[PW-SAW with respect to PW-RW]{
\begin{tabular}{lccc} \hline
 & \multicolumn{3}{c}{\rule{0pt}{11pt}Reduction of $\lave$ (\%)} \\ \cline{2-4}
\rule{0pt}{11pt}Network type & $\p=0$ & $\p=0.01$ & $\p=0.1$ \\ \hline 
Regular      & 5.67 & 8.22 & 11.24 \\
ER	     & 6.25 & 9.10 & 11.88 \\
Scale-free   & 6.53 & 9.75 & 12.65 \\ 
\hline
\end{tabular}
\label{tab:H_reduction_PWRW_PWSAW}
}
\caption{Reductions of average search lengths.}
\label{tab:H_reductions}
\end{table}

\section{Choose-First PW-SAW}
\label{sec:lave_SAW}

As it was  pointed in Section~\ref{sec:analytical} when we introduced the PW construction mechanism in Stage 1, a possible variation consists of using self-avoiding walks (SAW) instead of simple random walks. The resulting search mechanism is called PW-SAW. The basic idea is to revisit less nodes, thus increasing the chances of locating the desired resource. 
In short, a SAW chooses the next node to visit uniformly at random among the neighbors that have not been visited so far by the walk. If all neighbors have already been visited, it chooses uniformly at random among all neighbors, like a simple random walk.

\paragraph{Analysis of Choose-First PW-SAW.}
\label{sec:analysis_PWSAW}

When PWs are self-avoiding walks, their concatenation is not a random walk, and hence Theorem~\ref{t-expectedL} is no longer valid. 
We state a new theorem here for the choose-first PW-SAW mechanism and prove it in Appendix~\ref{s-proofofthm-SAW} using a different approach.

\begin{theorem}
\label{t-expectedL-SAW}
If the expected number of trailing steps is assumed to be uniformly distributed in $[0, s-1]$, then the expected search length of PW-SAW is 
\begin{equation}
 \Lsexp = \frac{1}{N}\,\sum_k \nk\,\left(\frac{1}{\ps(k)} \cdot \left(\pn(k) + s\cdot\pf(k)\right) + \frac{s-1}{2}\right). 
\end{equation}
\end{theorem}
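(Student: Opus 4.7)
The plan is to condition on the degree of the source node and to model the search as a geometric sequence of PW attempts. Let $E(k)$ denote the conditional expected search length given a source of degree $k$. Since the source is uniform over the $N$ nodes and $n_k$ of them have degree $k$, the law of total expectation gives $\Lsexp = (1/N)\sum_k n_k\, E(k)$, so it suffices to show $E(k) = (\pn(k)+s\pf(k))/\ps(k) + (s-1)/2$. First, I would identify the three outcomes of a single PW query from a degree-$k$ source: true positive (the resource lies in the PW, probability $\ps(k)$, since Bloom filters have no false negatives), correct negative (probability $\pn(k)$) and false positive (probability $\pf(k)$), with $\ps(k)+\pn(k)+\pf(k)=1$. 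These probabilities depend on $k$ because the set of nodes reached by an SAW, and hence the chance that it contains the uniformly placed resource, depends on the degree of its starting node.

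Next I would view the search as a geometric process: each attempt terminates with probability $\ps(k)$, and otherwise incurs a cost of one hop (a jump, with probability $\pn(k)$) or $s$ hops (an unnecessary traversal, with probability $\pf(k)$). The expected number of failed attempts is $(1-\ps(k))/\ps(k)$ and the conditional expected cost of one failed attempt is $(\pn(k)+s\pf(k))/(1-\ps(k))$; by Wald-type linearity the expected total pre-success cost collapses to $(\pn(k)+s\pf(k))/\ps(k)$. The final, successful attempt contributes only its trailing steps, whose expectation is $(s-1)/2$ under the uniformity hypothesis of the theorem. Summing the two terms gives $E(k)$, and averaging over $k$ with weights $n_k/N$ yields the stated formula.

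The main obstacle I anticipate is justifying the homogeneity baked into the geometric step: after a jump or a false-positive traversal the search continues from the endpoint of an SAW whose degree need not equal $k$, so successive attempts are not strictly governed by the same triple $(\ps(k),\pn(k),\pf(k))$. I would handle this in the spirit of the PW-RW analysis by adopting it as a working modelling assumption --- analogous to the ``PWs are never reused'' assumption in Theorem~\ref{t-expectedL} --- and relying on the simulations in the subsequent section to validate the resulting closed form. A cleaner alternative is to argue that, under the randomness of the underlying network and the uniform resource placement, averaging the per-attempt probabilities over the induced distribution of endpoint degrees reproduces the degree-$k$ statistics, making the geometric approximation exact after averaging. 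Beyond this modelling step, the remainder is a routine Wald/geometric computation.
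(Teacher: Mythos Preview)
Your geometric/Wald decomposition is exactly what the paper does (phrased there as the one-step recurrence $\Lsexp = (\Lsexp+1)\pn + (\Lsexp+s)\pf + \frac{s-1}{2}\ps$ and then solved for $\Lsexp$), so the algebra is fine. The real issue is that you have misidentified what $k$ is. In the paper, $k$ is the degree of the node that \emph{holds the resource}, not the degree of the source node. The probabilities $\ps(k),\pf(k),\pn(k)$ depend on $k$ because the chance that a partial walk passes through a \emph{specific} node is governed by that node's degree (via $\pres(k)$ in Equation~\ref{eq:pres}); they do not depend on the degree of the node from which the PW is launched. The final averaging $\frac{1}{N}\sum_k n_k(\cdot)$ is then over the degree of the resource-holder, which is uniform over nodes because the resource is placed uniformly at random.

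This misreading is precisely what generates the ``main obstacle'' you flag. If $k$ were the source degree, then indeed after one jump the relevant degree changes and the geometric step is in trouble. But once you condition on the resource-holder's degree, that $k$ is fixed for the entire search, and the per-attempt success probability $\ps(k)$ is (to the model's level of approximation) the same at every node the search visits, since any PW---wherever it starts---hits a degree-$k$ target with roughly the same probability. The homogeneity you worry about is therefore built in by the correct conditioning, and no extra ``working modelling assumption'' or averaging-over-endpoint-degrees argument is needed. Fix the role of $k$ and your proof goes through essentially verbatim and matches the paper's.
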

In the above theorem, \pn, \ps, and \pf\ are the probabilities that the query of the Bloom filter of the chosen PW in the current node returns a (true) negative, a true positive, and a false positive result, respectively, as a funcion of $k$, the degree of the node holding the resource. The proof in Appendix~\ref{s-proofofthm-SAW} gives expressions for these probabilities.

\paragraph{Expected Search Length in PW-SAW.}

In this section, we compare the analytic results from the model with experimental data from simulations. Figure~\ref{fig:figB}\subref{fig:ls_SAW_all_p0_s} shows the expected search length (\Lsexp) as a function of the size of PWs (\s) in a regular network, an ER network and a scale-free network, for $p=0$. The curves in this graph are plotted using Equation~\ref{eq:Lsexp_SAW3} and previous equations. 

According to the results computed using the PW-SAW model, the minimum search lengths occur for values around $\s=141$, $\s=149$ and $\s=167$ for the regular, ER and scale-free networks, respectively. These values are slightly lower than the ones predicted by the PW-RW model (Figure~\ref{fig:figA}\subref{fig:ls_s}), which were $\sopt=150, 157$ and $174$, respectively. 

Both the model curves and the simulation experiments have been computed for $\w=5$, chosen as a reference value. However, it has been observed that very similar results are obtained if we change the value of \w. Furthermore, plots of the model equations for different values of \w\ are coincident. This behavior was also observed for PW-RW (Section~\ref{sec:search_length_distr}), where we found that the average search length remained almost constant as we increased \w. The reason for this is that the probability of the resource being in the chosen PW (\pres\ in Equation~\ref{eq:Pij}) does not depend on the number of PWs in the node.

\begin{figure*}[!t]
 \centerline{
  \subfloat[As a function of \s\ for $\p=0$. The optimal points $(\sopt,\Lexpopt)$ for each network are $(141,139.92)$, $(149,148.55)$, and $(167,164.75)$.]{
   \includegraphics[width=7.8cm]{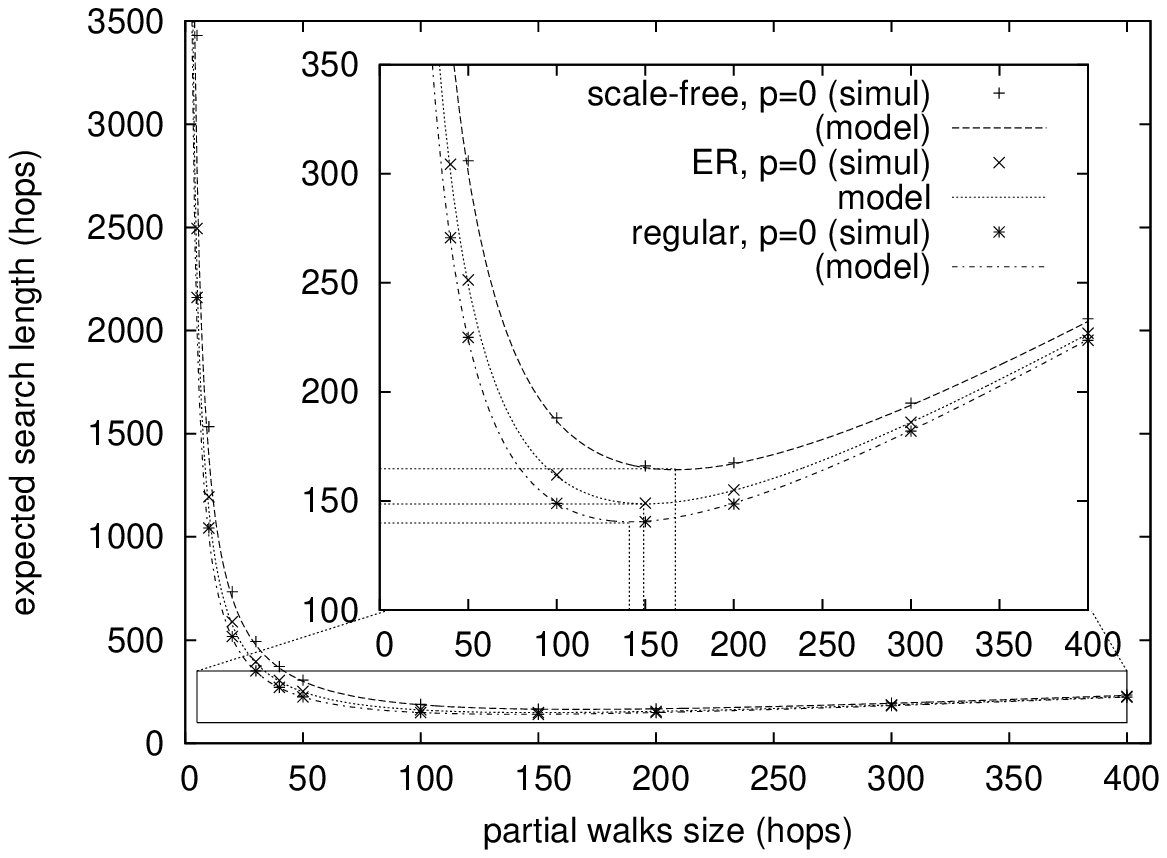} \label{fig:ls_SAW_all_p0_s}
  }
  \hfil
  \subfloat[Comparison with PW-RW for $p=0,0.01,0.1$.]{
   \includegraphics[width=7.8cm]{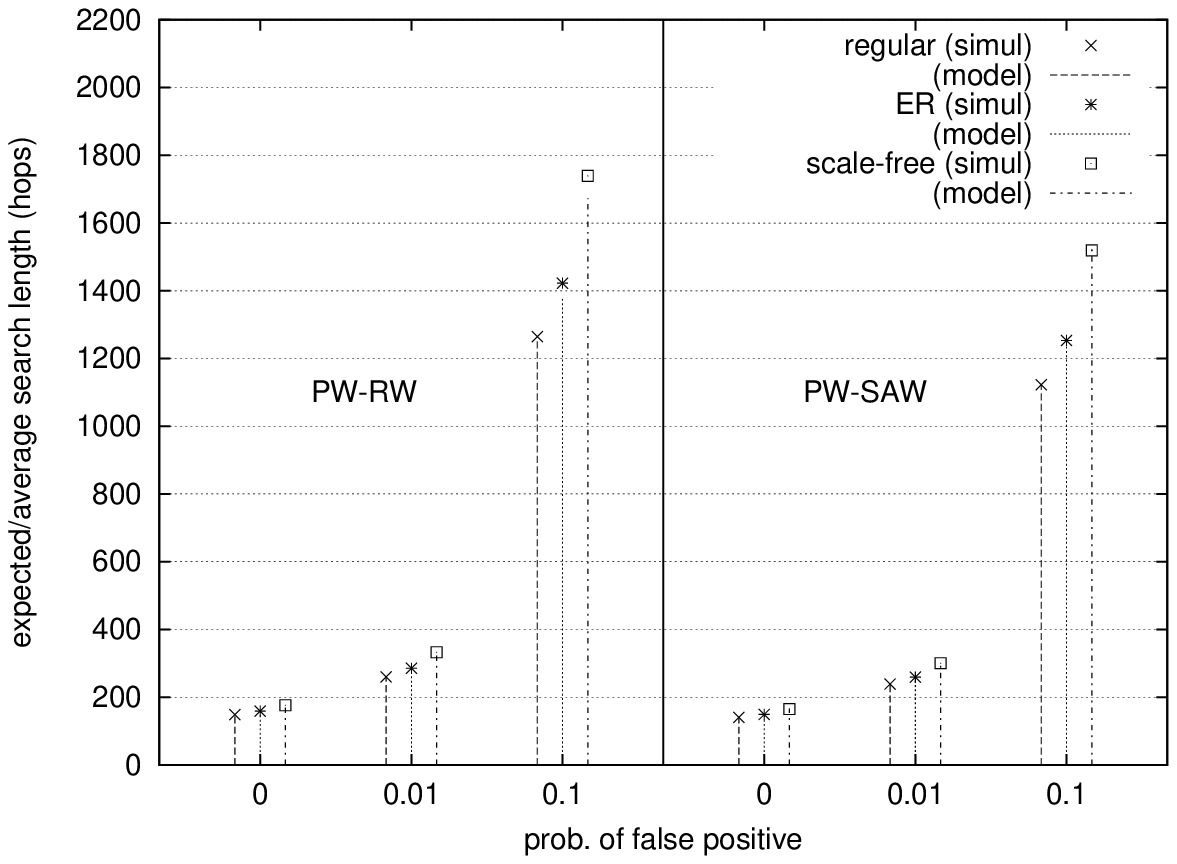} \label{fig:l_RW_SAW_sm}
  }
 } 
 \caption{Expected search length of PW-SAW in a regular network, an ER network and a scale-free network. }
 \label{fig:figB}
\end{figure*}



We now compare the results of the PW-RW and PW-SAW mechanisms. Figure~\ref{fig:figB}\subref{fig:l_RW_SAW_sm} shows results for PW-RW (left part) and for PW-SAW (right part), in the three networks considered in our study, and for values of $\p=0, 0.01$ and $0.1$. Expected search lengths from the analytical models are shown as vertical bars, while average search lengths from the simulations experiments are shown as points. The size of the PWs has been set to $\s=150, 157$ and $174$ for the regular, ER and scale-free networks, respectively, which are the optimal values predicted by the PW-RW model. For all the networks, we have found a very good correspondence between model predictions and simulation results.


\paragraph{Comparison of performance with respect to choose-first PW-RW.} 
\smallskip

If we compare the performance of the proposed search mechanisms, we observe that the reduction in the average search length that PW-SAW achieves with respect to PW-RW for a given \p\ is largest for the scale-free network, followed by the ER network and then by the regular network. For each network type, the reduction is larger for higher \p. Actual values can be found in Table~\ref{tab:H_reductions}\subref{tab:H_reduction_PWRW_PWSAW}.



\section{Check-First PW-RW and PW-SAW}
\label{sec:check-first}

We now present the check-first versions of the PW-RW and PW-SAW search mechanisms, introduced in Section~\ref{sec:analytical}.
Suppose the search is currently in a node and it needs to pick one of the PWs in that node to decide whether to traverse it or to jump over it.
With the new check-first mechanism,
it first \emph{checks} the associated resource information of \emph{all} the PWs of the node, and then randomly \emph{chooses} among the PWs with a positive result, if any (otherwise, it chooses among all PWs of the node, as the choose-first version).
These check-first mechanisms improve the performance of their choose-first counterparts, since the probability of choosing a PW with the resource increases. This comes at the expense of slightly incrementing the processing power used since several PWs need to be checked, but without incurring extra storage space costs. 

A minor additional
difference between the algorithms is that
in the check-first version, the resource information is registered from the \emph{first} node (the node next to the current node) to the \emph{last} node in the PW. This change slightly improves the performance of the new version, since the probability of choosing a PW with the resource increases also in the cases where the resource is held by the last node of the PW.
We have adapted the analysis presented in Section~\ref{sec:analysis_PWSAW} to reflect the new behavior of the check-first PW-RW/PW-SAW mechanisms. Details can be found in Appendix~\ref{sec:analysis_check-first}.

\omt{
Most of the analysis provided above
for the choose-first PW-RW mechanism
is still valid for check-first PW-RW. We present here the equations that need to be modified to reflect the new behavior. That is the case of Equations~\ref{eq:RW_probs_1} for the probabilities of choosing a PW with a true positive, false positive, and negative result, respectively. Their counterparts follow. Remember that $i$ and $j$ represent the number of PWs of the node that return a true positive result and an false positive result, respectively:
\begin{eqnarray}
  \ps & = & \sum_{i=1}^{w} \sum_{j=0}^{w-i} P(i,j)\cdot \frac{i}{i+j}, \nonumber \\
  \pf & = & \sum_{i=0}^{w-1} \sum_{j=1}^{w-i} P(i,j)\cdot \frac{j}{i+j}, \nonumber \\
  \pn & = & P(0,0) = 1-\ps-\pf.
 \label{eq:RW_probs_checkfirst}
\end{eqnarray}

The expression for $\pres(k)$\ in Equation~\ref{eq:pres_1} is still valid for the new check-first PW-RW mechanism. However, Equation~\ref{eq:pres} for the new PW-SAW mechanism needs to be updated since the range of nodes whose resources are associated with the PW has changed from $[0,s-1]$ to $[1,s]$:

\begin{equation}
 \pres(k) = 1 - \prod_{l=1}^{s} \left(1 - \frac{k}{S-l\kave}\right).
 \label{eq:pres_SAW_checkfirst}
\end{equation}
Finally, Equation~\ref{eq:Lsexp_RW3_1} also needs modification in the expectation of trailing steps, for the same reason. The new version, which completes the analysis of the check-first mechanisms, is:
\begin{equation}
 \Lsexp = \frac{1}{N}\,\sum_k \nk\,\left(\frac{1}{\ps(k)} \cdot \left(\pn(k) + s\cdot\pf(k)\right) + \frac{s}{2}\right).
\label{eq:Lsexp_RW3_checkfirst}
\end{equation}
} 

\paragraph{Expected Search Length in Check-First PW-RW/PW-SAW.}

Figure~\ref{fig:ls_all_p0.01_s} shows the expected search length (\Lsexp) as a function of the size of PWs (\s) in a regular network for the four mechanisms presented so far: choose-first PW-RW/PW-SAW, and check-first PW-RW/PW-SAW, for $p=0.01$ and $w=5$. We observe that the check-first mechanisms achieve a lower minimum expected search length than the original choose-first mechanisms, as expected. In fact, the expected search length can be lowered further by increasing $w$, the number of PWs per node, clearly at the expense of increasing the cost of the PWs construction stage. Also interesting is the observation that the minimum expected search length occurs for significantly lower \s\ (\sopt\ falls from 150 to about 50), meaning shorter PWs in the nodes, which in turn decreases the cost of the PWs construction stage.  
With regard to the PW-SAW mechanisms, we note that they achieve a slight decrease in the expected search length with respect to the PW-RW mechanisms, for the check-first version as well as for the choose-first version (which was already observed in Table~\ref{tab:H_reductions}). Results for the ER and scale-free networks are similar and are omitted here.

\begin{figure}
 \centering
 \includegraphics[width=7.8cm]{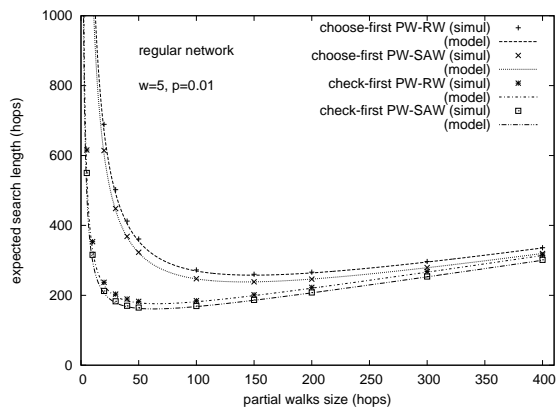}
 \caption{Expected search length of choose-first and check-first versions of PW-RW and PW-SAW as a function of \s\ in a regular network for $\p=0.01$ and $w=5$. Simulation and model results.}
 \label{fig:ls_all_p0.01_s}
\end{figure}


\omt{
\section{Conclusions}


We have proposed two mechanisms to search a network for a desired resource. Both mechanisms are based on building a total walk with partial walks that are precomputed and available at each network node. A Bloom filter for each partial walk stores the resources held by the nodes in the partial walk, so that the search can jump over partial walks in which the desired resource is not located. The mechanism PW-RW uses simple random walks as partial walks, whereas the mechanism PW-SAW uses self-avoiding walks as partial walks. We have presented analytical models for both mechanisms, and performed simulation experiments to validate their predictions. The mechanism PW-RW achieves a search length proportional to the square root of that obtained by simple random walk searches, when the probability of a Bloom filter returning a false positive is small. We have found that just two partial walks per node are enough to obtain a statistical behavior similar to that of a true random walk built with always fresh partial walks. The mechanism PW-SAW achieves further reductions of the expected search length, which depend on the type of network and the probability of false positives in Bloom filters.

An interesting future work line for this study is to measure the improvement in the search length that can be obtained by using different strategies to choose one of the partial walks available in a node.  Another possibility to shorten search lengths is to use more intelligent (and more costly) variants of random walks instead of simple random walks.
} 

\section{Future Work}
The proposed resource location mechanisms could be improved with new strategies to choose from the PWs available at the nodes. Smarter (and more costly) variants of RWs could be used as PWs. It would be interesting to compare their application to unstructured P2P networks with algorithms for structured overlays like DHT or quorum systems.

\omt{ Long conclusions:
We have proposed a technique to search a network for a desired resource. Random walks are a simple mechanism that can be used for this purpose, although it obtains long search lengths. To shorten search lengths, the proposed technique builds random walks connecting together precomputed partial random walks. These partial walks are computed by each node in an initial stage before the searches take place. When a node computes a partial walk, it keeps the resources held by all its nodes in a Bloom filter. Each node of the network precomputes a predefined number of partial walks in this inital stage.

The node starting a search chooses one of its partial walks uniformly at random. Then it queries its associated Bloom filter to find out if the desired resource is in the nodes of that partial walk. If the result of the query is negative, the search jumps to the end of the partial walk, where this process is repeated. If the result of the query is positive, the search traverses that partial walk, checking each of the nodes in turn until the resource is found. Since Bloom filters yield false positives with a certain probability (that can be set to a small target value appropriately choosing the filter size), the search will traverse some partial walks completely without finding the resource. In this cases, when the search reaches the last node of the partial walk, the search chooses randomly one of its partial walks and it queries its Bloom filter, jumping over or traversing the partial walk depending on the query result, as described above. Compared with a simple random walk search, that hops from one node to the next until the resource is found, this mechanism saves hops because it jumps over most of the partial walks, traversing only the last partial walks until the resource is found, and a few partial walks due to false positives of the partial walks.

We have develop an analytical model of the mechanism just described, assuming the ideal case in which partial walks used are never reused when constructing searches. We have obtained an expression for the expected search length as a function of the size of the partial walks, and of the false positives probability. Our model also provides an expression for the optimal size of partial walks (the one that produces the optimal, i.e., the shortest expected search length), finding that it does not depend on the probability of false positives. Finally, an expression for the optimal expected search length is given, resulting that it is proportional to the square root of the expected search lenght of simple random walks (for small false positives probability). 

An experimental study based on simulations have been performed to validate the predictions of the model. Simulations have been also used to assess the impact of the number of partial walks precomputed in each node in the statistical behavior of searches. We have found that, if each node precomputes just two partial walks, both the expected search length and the probability distribution (histograms) of search lengths are very similar to those obtained with an infinite number of partial walks that are never reused, as is assumed by the model.

An interesting future work line for this study is to measure the improvement in the search length that can be obtained by using different strategies to choose one of the partial walks in a node. Another possibility to shorten search lengths is to use more intelligent (and more costly) variants of random walks instead of simple random walks.
} 

\bibliographystyle{unsrt}
\bibliography{randomwalks,networks}





\newpage
\appendix

\omt{
\section{Proof of Theorem~\ref{t-expectedL}}
\label{s-proofofthm}

\begin{proof}
Let \Prv, \Jrv, \Urv\ and \Trv\ be random variables representing the number of partial walks, jumps, unnecessary steps and trailing steps in a search, respectively. Their expectations are denoted as \Pexp, \Jexp, \Uexp\ and \Texp.
Since hops in a search can be jumps, unnecessary steps or trailing steps, it follows that,
$\Lsrv = \Jrv + \Urv + \Trv.$
Then, the expected search length for partial walks of size \s\ is\footnote{In the following, we make implicit use of the linearity properties of expectations of random variables.}
$\Lsexp = \Jexp + \Uexp + \Texp.$

The expected number of jumps can be obtained from the expected number of partial walks in the search (\Pexp) and from the probability of false positive (\p) as
$\Jexp = \Pexp\cdot(1-p),$
since \Jrv\ follows a binomial distribution $B(\Prv,1-\p)$, where the number of experiments is the random variable representing the number of partial walks in a search (\Prv) and the success probability is the probability of obtaining a negative result in a Bloom filter query ($1-\p$)\footnote{If $Y$ is a random variable with a binomial distribution with success probability $p$, in which the number of experiments is in turn the random variable $X$, it can be easily shown that $\overline{Y} = \overline{X}\cdot p$ (see Appendix~\ref{sec:demo_exp}).}. 

For the expected number of unnecessary steps,
$\Uexp = \Pexp\cdot \p \cdot \s,$
since $\Pexp \cdot \p$ is the expected number of false positives in the search and each of them contributes with \s\ unnecesary steps.
The number of partial walks in a search can be obtained dividing the length of the total walk by the size of a partial walk: $\Prv = \left\lfloor \frac{\Lrv}{\s} \right\rfloor = \frac{\Lrv - \Trv}{\s}$. Then, the expected number of partial walks in a search is
$\Pexp = \frac{\Lexp - \Texp}{\s}.$

Since we assume that the expected number of trailing steps is uniformly distributed between $0$ and ($\s-1$), its expectation is
$\Texp = \frac{\s-1}{2}.$

Using the previous equations 
we have:
\begin{equation}
 \Lsexp = \left(\frac{\s}{2} + \frac{2\Lexp +1}{2\s} - 1 \right) + \p\cdot\left( \Lexp - \left(\frac{\s}{2} + \frac{2\Lexp +1}{2\s} - 1\right) \right),
\end{equation}
where the first term is the expectation of the search length for a ``perfect'' Bloom filter (one that never returns a false positive when the resource is not in the filter, i.e., $\p=0$), and the second term is the expectation of the additional search length due to false positives ($p\neq0$).

Another interpretation of this expression is obtained if we reorganize it to make explicit the contributions of a perfect filter and of a ``broken'' filter (one that always returns a false positive result when the resource is not in the filter, i.e., $\p=1$) as 
\begin{equation}
 \Lsexp = \left(\frac{\s}{2} + \frac{2\Lexp +1}{2\s} - 1 \right)\cdot (1-\p) + \Lexp\cdot p.
\end{equation}

\end{proof}
} 

\section{Distributions of the Number of Trailing Steps}
\label{ssec:distr_tsteps}

The previous proof of Theorem~\ref{t-expectedL} 
assumes that the distribution of the number of trailing steps in the last partial walk until the search finds the resource is uniform between 0 and $\s-1$, corresponding to the cases where the first node/last node in the partial walk holds the desired resource. Recall that the Bloom filter stores the resources held by the \s\ first nodes in the partial walk, from the node that precomputed the partial walk to the one before its last node (which is included in the partial walks departing from it).
We have obtained that distribution from the $10^6$ searches in our experiment for each of the three networks. 
Figure~\ref{fig:tdistr_reg} shows
the distributions for the regular network when $\s=10$, $\s=\sopt=150$ and $\s=1000$. Distributions for the ER and scale-free networks are similar in shape.

\begin{figure}
 \centering
 \includegraphics[width=8.5cm]{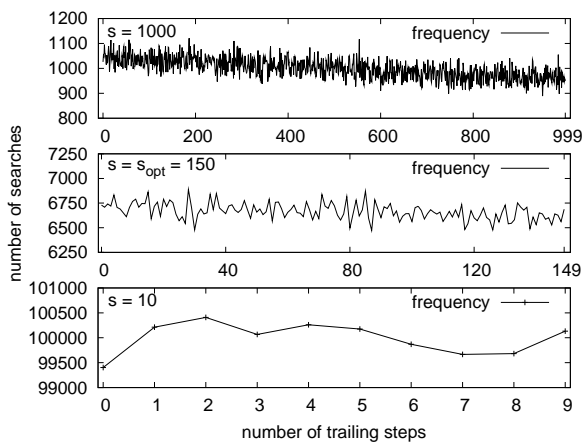}
 \caption{Distributions of the number of trailing steps in the regular network.}
 \label{fig:tdistr_reg}
\end{figure}

It is observed that there is a slight decrease on the frequency as the number of steps grows. This is due to the fact that the number of trailing steps is essentially the length of the total walk modulus the length of partial walks (\s).  The total walk is a random walk, and its distribution can be obtained approximately by Equation~\ref{eq:approx_formula}.\footnote{The distribution of simple random walk searches has also been obtained experimentally, showing that Equation~\ref{eq:approx_formula} is a good approximation.} Since it is a decreasing function, as it is shown below, the frequency on the left end of an interval of width \s\ is always higher than the frequency on the right end, thus accounting for the observed decrease. 

This means that the result provided by Theorem~\ref{t-expectedL} 
is \emph{pessimistic}, since the estimated average number of trailing steps is slightly higher than the real one. Results in Section~\ref{sec:perf_eval} have shown that values of average search lengths predicted by Equation~\ref{eq:Lsexp_p} are very similar to values computed from simulations, with larger error for higher values of \s.


The probability distribution of simple random walk searches can be estimated using Equation~\ref{eq:approx_formula}. It can be demonstrated that it is strictly decreasing, that is: $P_{i} - P_{i-1} < 0$ for $0 \leq i < \infty$, as follows:
\begin{eqnarray*}
 P_0 & = & \frac{1}{N}, \\
 P_i & = & \left(1 - \sum_{j=0}^{i-1}{P_j}\right) \cdot \frac{1}{N-1},\ \mathrm{for}\ i>0.
 \label{eq:approx_formula}
\end{eqnarray*}

First, it is shown by induction that $0 < \sum_{i=0}^k P_i  < 1$ for $k\geq 0$ and $N>0$. It hols trivially for $k=0$. Then, it is also true for $k>0$ if it holds for $k-1$:
\begin{eqnarray*}
 \sum_{i=0}^k P_i & = & \sum_{i=0}^{k-1} P_i + \left( 1 - \sum_{i=0}^{k-1} P_i  \right)\cdot \frac{1}{N-1}\\
                  & = & \frac{N-2}{N-1} \cdot \sum_{i=0}^{k-1} P_i + \frac{1}{N-1} \\
		  & < & \frac{N-2}{N-1} + \frac{1}{N-1} = 1.
\end{eqnarray*}

Next, it is shown that $0 < P_i < 1$ for $i\geq 0$ as a corollary of the previous result. It is checked for $i=0$ by inspection. For $i>0$, we have that $P_i = \left ( 1 - \sum_{j=0}^{i-1} P_j \right )\cdot\frac{1}{N-1} $. By the previous result:
\begin{eqnarray*}
 0 < 1 - \sum_{j=0}^{i-1} P_j < 1,
\end{eqnarray*}
then we have that:
\begin{eqnarray*}
 0 < P_i = \left(1 - \sum_{j=0}^{i-1} P_j\right)\cdot\frac{1}{N-1} < 1.
\end{eqnarray*}

Finally, it is shown that $P_i - P_{i-1} < 0$ for $i>0$. For $i=1$, it is checked by inspection. For $i>1$:
\begin{eqnarray*}
 P_i - P_{i-1}\!\!\! &=& \!\!\!\left( 1 - \sum_{j=0}^{i-1} P_j \right)\frac{1}{N-1} - \left( 1 - \sum_{j=0}^{i-2} P_j \right)\frac{1}{N-1} \\
              \!\!\! &=& \!\!\! -\frac{P_{i-1}}{N-1}.
\end{eqnarray*}
Since we have shown that $0 < P_{i-1} < 1$, it follows that $P_i - P_{i-1} < 0$.

\section{Expectation of a Random Variable with a Binomial Distribution in Which the Number of Experiments is Another Random Variable} 
\label{sec:demo_exp}

Let $X$ be a random variable with sample space $S = \mathbb{N}_0 = \{0,1,2\ldots\}$. Let $Y$ be a random variable representing the number of successes when $X$ experiments are performed with a success probability $p$. $Y$ has a binomial probability distribution $Y\sim \mathrm{B}(X,p)$, where the number of experiments is, in turn, a random variable. Then, from the definition of expectation and applying the Total Probability Theorem, the expectation of $Y$ is $\mathrm{E}[Y] = \mathrm{E}[X]\cdot p$.
\begin{eqnarray*}
\Exp{Y} & = & \sum_{y=0}^\infty y \cdot \Prob{Y=y} \\
  & = & \sum_{y=0}^\infty y \cdot \left\{ \sum_{x=0}^\infty \Prob{Y=y|X=x} \cdot \Prob{X=x} \right\}  \\
  & = & \sum_{x=0}^\infty \Exp{Y|X=x} \cdot \Prob{X=x} \\
  & = & \sum_{x=0}^\infty x \cdot p \cdot \Prob{X=x} = \Exp{X} \cdot p.
\end{eqnarray*}


\section{Proof of Theorem~\ref{t-expectedL-SAW}}
\label{s-proofofthm-SAW}

\begin{proof}
We write a recurrence equation for the expected length, given that the search is currently in any of the nodes it visits. Since we have defined the expected search length for any pair of source and target nodes, the expected length of the search from the current node and the expected length of the search from the source node are the same. Denoting it by \Lsexp, as in the previous section, we can write:
\begin{equation}
 \Lsexp = (\Lsexp + 1)\cdot \pn + (\Lsexp + s)\cdot \pf + \frac{s-1}{2}\cdot \ps,
 \label{eq:Lsexp_SAW_recur}
\end{equation}

\noindent 
where \pn, \ps, and \pf\ are the probabilities that the query of the Bloom filter of the chosen partial walk in the current node returns a (true) negative, a true positive, and a false positive result, respectively, with $\pn+\ps+\pf=1$.
Solving for \Lsexp, we obtain:
\begin{equation}
 \Lsexp = \frac{1}{\ps} \cdot (\pn + s\cdot\pf) + \frac{s-1}{2}.
 \label{eq:Lsexp_SAW}
\end{equation}

\noindent This equation can be rewritten as:
\begin{equation}
 \Lsexp = \frac{1-\ps}{\ps} \cdot \left(\frac{\pn}{1-\ps} + s\cdot\frac{\pf}{1-\ps}\right) + \frac{s-1}{2}
 \label{eq:Lsexp_SAW2},
\end{equation}

\noindent 
which is an alternative formulation of the expected search length, in terms of the expected number of partial walks of the search (\Pexp, as defined in Section~\ref{sec:lave_RW}). Note that \mbox{$(1-\ps)/\ps$} is the expectation of \Pexp, a geometric random variable representing the number of failures before a Bloom filter returns a true positive (with probability \ps). The fractions within the parenthesis are, respectively, the probabilities of jumping a partial walk or traversing it, conditional on the fact that the Bloom filter does not return a true positive. Therefore, the terms in the parenthesis are the expectations of \Jexp\ and \Uexp, binomial random variables representing the number of jumps and the number of partial walks that are unnecessarily traversed, respectively, as defined in Section~\ref{sec:lave_RW}.

We now calculate the probabilities in the equations above using $P(i,j)$, the probability that, in the \w\ partial walks of a node, there are $i$ partial walks that contain the node that holds the resource (i.e., their Bloom filters return a true positive), and $j$ partial walks that do not contain the resource, but whose filters return false positives:
\begin{equation}
 P(i,j) = B(\w,\pres,i)\cdot B(\w-i,p,j),
 \label{eq:Pij}
\end{equation}

\noindent
where $B(m,q,n)$ is the coefficient of the binomial distribution: 
$B(m,q,n) = \left(\begin{array}{c} m\\ n \end{array}\right)\cdot q^n \cdot (1-q)^{(m-n)}$.

In Equation~\ref{eq:Pij} we are using \pres, defined as the probability that a partial walk includes the node that holds the desired resource. This probability is proportional to the degree of the node that holds the resource, since the probability that a random walk visits a node depends on its degree (see~\cite{rw:Lovasz93}, for example). We assume known the number of nodes of each degree $k$ in the network, i.e., its degree distribution, which we denote by \nk.

Denoting by $k$ the degree of the node that holds the resource, the probability that a partial walk of size \s\ contains the resource is then $\pres(k)$, and it can be estimated as:
\begin{equation}
 \pres(k) = 1 - \prod_{l=0}^{s-1} \left(1 - \frac{k}{S-l\kave}\right),
 \label{eq:pres}
\end{equation}
where $S$ denotes the number of endpoints in the network ($S=\sum_k k\,\nk$) and \kave\ denotes the average degree of the network ($\kave=\sum_k k\,\nk/N$). Each factor in the product in Equation~\ref{eq:pres} represents the probability that the resource is not found in the $l$th hop of a partial walk, conditional on the fact that it was not found in the previous hops of that partial walk. Note that the fraction $k/(S-l\kave)$ is the probability of the $l$th hop finding the resource, expressed as the number of endpoints that belong to the node that holds the resource divided by the total number of endpoints in the network, except those belonging to nodes already visited by the partial walk, which are \kave\ per hop, on the average. 


Now we rewrite Equation~\ref{eq:Pij} making its dependence on $k$ explicit:
\begin{equation}
 P(i,j|k) = B(\w,\pres(k),i)\cdot B(\w-i,p,j),
\end{equation}
Then, the probabilities in Equations~\ref{eq:Lsexp_SAW_recur}~and~\ref{eq:Lsexp_SAW} are:
\begin{eqnarray}
  \ps(k) & = & \sum_{i=1}^{w} \sum_{j=0}^{w-i} P(i,j|k)\cdot \frac{i}{w} \nonumber \\
  \pf(k) & = & \sum_{i=0}^{w} \sum_{j=1}^{w-i} P(i,j|k)\cdot \frac{j}{w} \nonumber \\
  \pn(k) & = & 1-\ps(k)-\pn(k).
 \label{eq:SAW_probs}
\end{eqnarray}

The expected search length can be finally obtained weighing Equation~\ref{eq:Lsexp_SAW} with the probability that the resource is in a node with degree $k$, which is $\nk/N$, for all values of $k$:
\begin{equation}
 \Lsexp = \frac{1}{N}\,\sum_k \nk\,\left(\frac{1}{\ps(k)} \cdot \left(\pn(k) + s\cdot\pf(k)\right) + \frac{s-1}{2}\right).
\label{eq:Lsexp_SAW3}
\end{equation}
\end{proof}

\section{Alternative Analysis for Choose-First PW-RW}
\label{sec:alt-analysis-PWRW1}

This section presents an alternative analysis for the model of the choose-first PW-RW mechanism described in Section~\ref{sec:lave_RW}. This analysis is based on that of the PW-SAW mechanism, presented in Section~\ref{sec:lave_SAW} and proved in Appendix~\ref{s-proofofthm-SAW}. In fact, only the expression for $\pres(k)$ (Equation~\ref{eq:pres}), defined as the probability that a given PW contains the node that holds the resource, needs to be rewritten to reflect the fact that the PW is a simple random walk instead of a self-avoiding random walk. The new expression is:

\omt{
When partial walks are self-avoiding walks, their concatenation is not a random walk, and hence the analysis in the previous section is no longer valid. Here we use a different approach, writing a recurrence equation for the expected length, given that the search is currently in any of the nodes it visits. Since we have defined the expected search length for any pair of source and target nodes, the expected length of the search from the current node and the expected length of the search from the source node are the same. Denoting it by \Lsexp, as in the previous section, we can write:
\begin{equation}
 \Lsexp = (\Lsexp + 1)\cdot \pn + (\Lsexp + s)\cdot \pf + \frac{s-1}{2}\cdot \ps,
 \label{eq:Lsexp_RW_recur_1}
\end{equation}

\noindent 
where \pn, \ps, and \pf\ are the probabilities that the query of the Bloom filter of the chosen partial walk in the current node returns a (true) negative, a true positive, and a false positive result, respectively, with $\pn+\ps+\pf=1$.
Solving for \Lsexp, we obtain:
\begin{equation}
 \Lsexp = \frac{1}{\ps} \cdot (\pn + s\cdot\pf) + \frac{s-1}{2}.
 \label{eq:Lsexp_RW_1}
\end{equation}

\noindent This equation can be rewritten as:
\begin{equation}
 \Lsexp = \frac{1-\ps}{\ps} \cdot \left(\frac{\pn}{1-\ps} + s\cdot\frac{\pf}{1-\ps}\right) + \frac{s-1}{2}
 \label{eq:Lsexp_RW2_1},
\end{equation}

\noindent 
which is an alternative formulation of the expected search length, in terms of the expected number of partial walks of the search (\Pexp, as defined in Section~\ref{sec:lave_RW}). Note that \mbox{$(1-\ps)/\ps$} is the expectation of \Pexp, a geometric random variable representing the number of failures before a Bloom filter returns a true positive (with probability \ps). The fractions within the parenthesis are, respectively, the probabilities of jumping a partial walk or traversing it, conditional on the fact that the Bloom filter does not return a true positive. Therefore, the terms in the parenthesis are the expectations of \Jexp\ and \Uexp, binomial random variables representing the number of jumps and the number of partial walks that are unnecessarily traversed, respectively, as defined in Section~\ref{sec:lave_RW}.

We now calculate the probabilities in the equations above using $P(i,j)$, the probability that, in the \w\ partial walks of a node, there are $i$ partial walks that contain the node that holds the resource (i.e., their Bloom filters return a true positive), and $j$ partial walks that do not contain the resource, but whose filters return false positives:
\begin{equation}
 P(i,j) = B(\w,\pres,i)\cdot B(\w-i,p,j),
 \label{eq:Pij_1}
\end{equation}

\noindent
where $B(m,q,n)$ is the coefficient of the binomial distribution: 
$B(m,q,n) = \left(\begin{array}{c} m\\ n \end{array}\right)\cdot q^n \cdot (1-q)^{(m-n)}$.

In Equation~\ref{eq:Pij_1} we are using \pres, defined as the probability that a partial walk includes the node that holds the desired resource. This probability is proportional to the degree of the node that holds the resource, since the probability that a random walk visits a node depends on its degree (see~\cite{rw:Lovasz93}, for example). We assume known the number of nodes of each degree $k$ in the network, i.e., its degree distribution, which we denote by \nk.

Denoting by $k$ the degree of the node that holds the resource, the probability that a partial walk of size \s\ contains the resource is then $\pres(k)$, and it can be estimated as:
} 

\begin{equation}
 \pres(k) = 1 - \left(1 - \frac{k}{S-\kave_{rw}}\cdot\frac{\kave_{rw}-1}{\kave_{rw}}\right)^s.
 \label{eq:pres_1}
\end{equation}
The first fraction within the parenthesis in Equation~\ref{eq:pres_1} is the ratio of positive endpoints (the degree of the node that holds the resource) and all endpoints in the network ($S=\sum_k k\,\nk$) except those of the current node. We use $\kave_{rw}$, which denotes the expectation of the degree of a node visited by a random walk, as an estimation of the degree of the current node. It can be obtained as:
\begin{equation}
 \kave_{rw} = \sum_k k\cdot \frac{k\cdot n_k}{S} = \frac{1}{S} \cdot \sum_k k^2\cdot n_k.
 \label{eq:pkexp_1}
\end{equation}
The second fraction within the parenthesis in Equation~\ref{eq:pres_1} corrects the previous ratio taking into account that, when at a node of a given degree, the probability of not going backwards (and therefore having the chance to find the resource) is the probability of selecting any of its endpoints but the one that connects it with the node just visited.

The rest of the equations in Appendix~\ref{s-proofofthm-SAW} are valid for this analysis of the choose-first PW-RW mechanism.

\omt{

Now we rewrite Equation~\ref{eq:Pij_1} making its dependence on $k$ explicit:
\begin{equation}
 P(i,j|k) = B(\w,\pres(k),i)\cdot B(\w-i,p,j),
\end{equation}
Then, the probabilities in Equations~\ref{eq:Lsexp_RW_recur_1}~and~\ref{eq:Lsexp_RW_1} are:
\begin{eqnarray}
  \ps(k) & = & \sum_{i=1}^{w} \sum_{j=0}^{w-i} P(i,j|k)\cdot \frac{i}{w} \nonumber \\
  \pf(k) & = & \sum_{i=0}^{w} \sum_{j=1}^{w-i} P(i,j|k)\cdot \frac{j}{w} \nonumber \\
  \pn(k) & = & 1-\ps(k)-\pn(k).
 \label{eq:RW_probs_1}
\end{eqnarray}

The expected search length can be finally obtained weighing Equation~\ref{eq:Lsexp_RW_1} with the probability that the resource is in a node with degree $k$, which is $\nk/N$, for all values of $k$:
\begin{equation}
 \Lsexp = \frac{1}{N}\,\sum_k \nk\,\left(\frac{1}{\ps(k)} \cdot \left(\pn(k) + s\cdot\pf(k)\right) + \frac{s-1}{2}\right).
\label{eq:Lsexp_RW3_1}
\end{equation}
} 


\section{Analyses of Check-First PW-RW and PW-SAW}
\label{sec:analysis_check-first}

This section presents the analyses of the check-first versions of the PW-RW and PW-SAW mechanisms introduced in Section~\ref{sec:check-first}. This analysis is based on the analysis of the choose-first versions of the mechanisms (presented for PW-SAW in Section~\ref{sec:analysis_PWSAW} and adapted for PW-RW in Appendix~\ref{sec:alt-analysis-PWRW1}).

\omt{
We now analyze a variation of the choose first mechanism presented.
Suppose the search is currently in a node and it needs to pick one of the PWs in that node to decide whether to traverse it or to jump over it.
With the new check-first mechanism,
it first \emph{checks} the associated resource information of \emph{all} the PWs of the node, and then randomly \emph{chooses} among the PWs with a positive result, if any (otherwise, it chooses among all PWs of the node, as the original version).
This \emph{check-first} PW-RW mechanism improves the performance of the original (\emph{choose-first}) PW-RW, since the probability of choosing a PW with the resource increases, with no extra storage space cost.
A minor additional
difference between the algorithms is that
in the check-first version, the resource information is registered from the \emph{first} node (the node next to the current node) to the \emph{last} node in the PW. This change slightly improves the performance of the new version, since the probability of choosing a PW with the resource increases also in the cases where the resource is held by the last node of the PW.
}

Most of the expressions in the analysis of the choose-first versions are still valid for the check-first versions of the mechanisms, so we present here only the equations that need to be modified to reflect the new behavior. That is the case of Equations~\ref{eq:SAW_probs} for the probabilities of choosing a PW with a true positive, false positive, and negative result, respectively. Their counterparts follow. Remember that $i$ and $j$ represent the number of PWs of the node that return a true positive result and an false positive result, respectively:
\begin{eqnarray}
  \ps & = & \sum_{i=1}^{w} \sum_{j=0}^{w-i} P(i,j)\cdot \frac{i}{i+j}, \nonumber \\
  \pf & = & \sum_{i=0}^{w-1} \sum_{j=1}^{w-i} P(i,j)\cdot \frac{j}{i+j}, \nonumber \\
  \pn & = & P(0,0) = 1-\ps-\pf.
 \label{eq:RW_probs_checkfirst}
\end{eqnarray}

The expression for $\pres(k)$\ in Equation~\ref{eq:pres_1} is still valid for check-first PW-RW. However, Equation~\ref{eq:pres} needs to be modified for check-first PW-SAW, since the range of nodes whose resources are associated with the PW has changed from $[0,s-1]$ to $[1,s]$:

\begin{equation}
 \pres(k) = 1 - \prod_{l=1}^{s} \left(1 - \frac{k}{S-l\kave}\right).
 \label{eq:pres_SAW_checkfirst}
\end{equation}
Finally, Equation~\ref{eq:Lsexp_SAW3} also needs modification (for chech-first PW-SAW) in the expectation of trailing steps, for the same reason. The new version, which completes the analysis of the check-first mechanisms, is:
\begin{equation}
 \Lsexp = \frac{1}{N}\,\sum_k \nk\,\left(\frac{1}{\ps(k)} \cdot \left(\pn(k) + s\cdot\pf(k)\right) + \frac{s}{2}\right).
\label{eq:Lsexp_RW3_checkfirst}
\end{equation}

\section{Searches based on reused partial walks}
\label{search_length_distr_reused_PW}

In this section, we explore the distributions when the total walks are built reusing a limited number \w\ of partial walks precomputed in each node. This is in contrast with our initial assumption that precomputed partial walks are not reused in searches. Here, we attempt to answer the question ``How many partial walks does a node need to precompute, for the search lengths distribution to be similar to that corresponding to never reusing partial walks?''. Our results show that, for the networks considered in our experiment, and for the optimal partial walk size (\sopt), it is enough to have as few as \emph{two} precomputed partial walks in every node. The extreme case of having just \emph{one} precomputed partial walk yields a significant fraction of unfinished searches, since it is relatively easy to build walks that are loops that do not visit all the nodes. Indeed, if the last node of a partial walk is a node whose (only) partial walk has been previously used in that total walk, it will take the search to the same place again, resulting in a never-ending loop. However, if a node has several partial walks, and the search chooses one randomly among them (for the next jump or partial walk traversal), the chances of entering a loop are very small.

Figures~\ref{fig:L_distr_reg}\subref{fig:ldistr_reg10_s150_p0_PWinf_1_2}~to~\ref{fig:L_distr_reg}\subref{fig:ldistr_reg10_s150_p0.1_PWinf_1_2} show the search lengths distributions in the regular network. The top plots of these figures show the length distributions of searches based on PWs that are not reused. The middle and bottom plots show the length distributions of searches based on reusing a single partial walk or two partial walks per node, respectively.

We note that the shape of the distributions is the same for all values of \w. However, distributions for $\w=1$ are lower, and the average search length (marked as a vertical bar) is also smaller. This is due to a significant percentage of unfinished searches (about 26.3\%), left out of the histograms, due to loops as explained above. 
If we focus now on the distributions for $\w=2$, we observe that both the distribution and the average search length are very similar to those for PWs that are not reused. We have performed additional experiments with higher values of \w, confirming this observation. This suggests that just two precomputed partial walks per node are enough to obtain a behavior close to the theorical case of using PWs that are not reused. 
The distributions of searches in the ER network and the scale-free network are omitted here, since their shape and the conclusions drawn are the same as for the regular network. 


\newcommand{\ancho}{5.4}

\begin{figure*}
 \centering
 \subfloat[$p=0$.]{
  \includegraphics[width=\ancho cm]{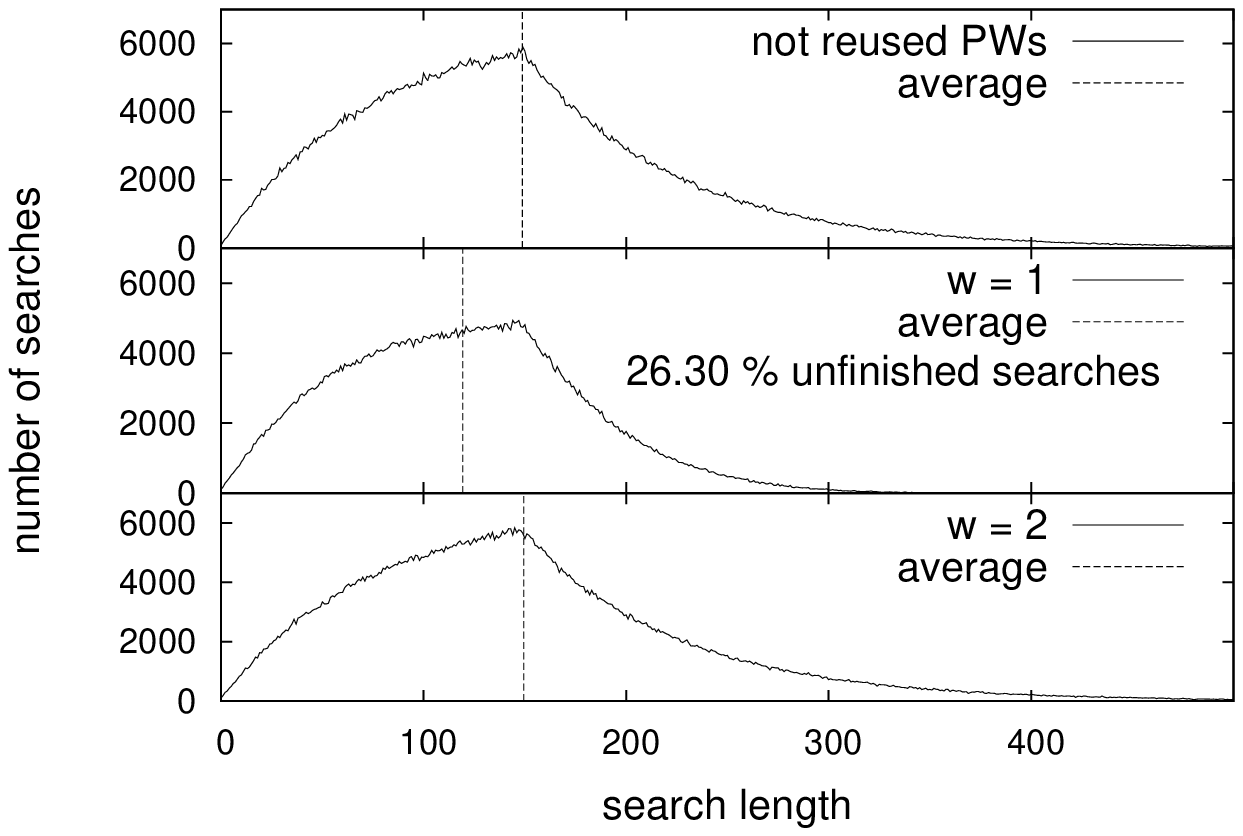}
  \label{fig:ldistr_reg10_s150_p0_PWinf_1_2}
 }
 \subfloat[$p=0.01$.]{
  \includegraphics[width=\ancho cm]{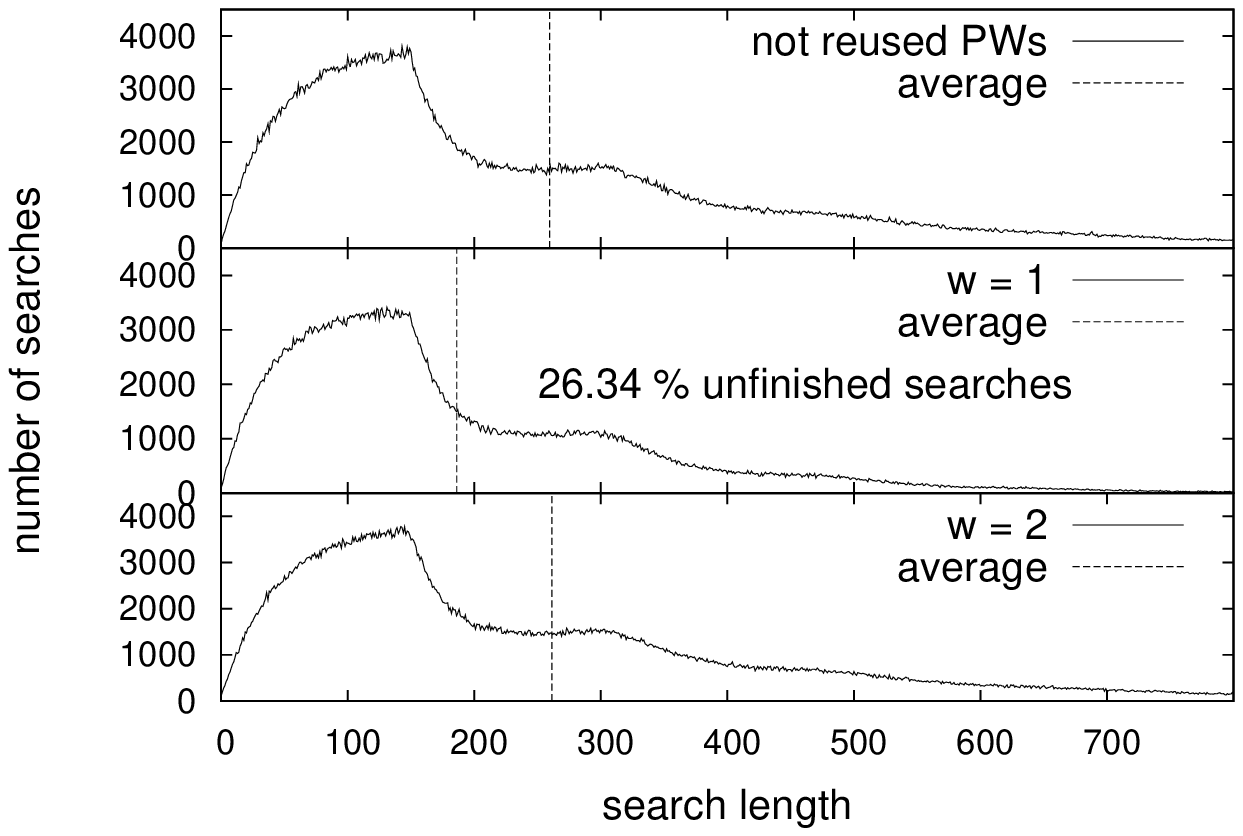}
  \label{fig:ldistr_reg10_s150_p0.01_PWinf_1_2}
 }
\qquad
 \subfloat[$p=0.1$.]{
  \includegraphics[width=\ancho cm]{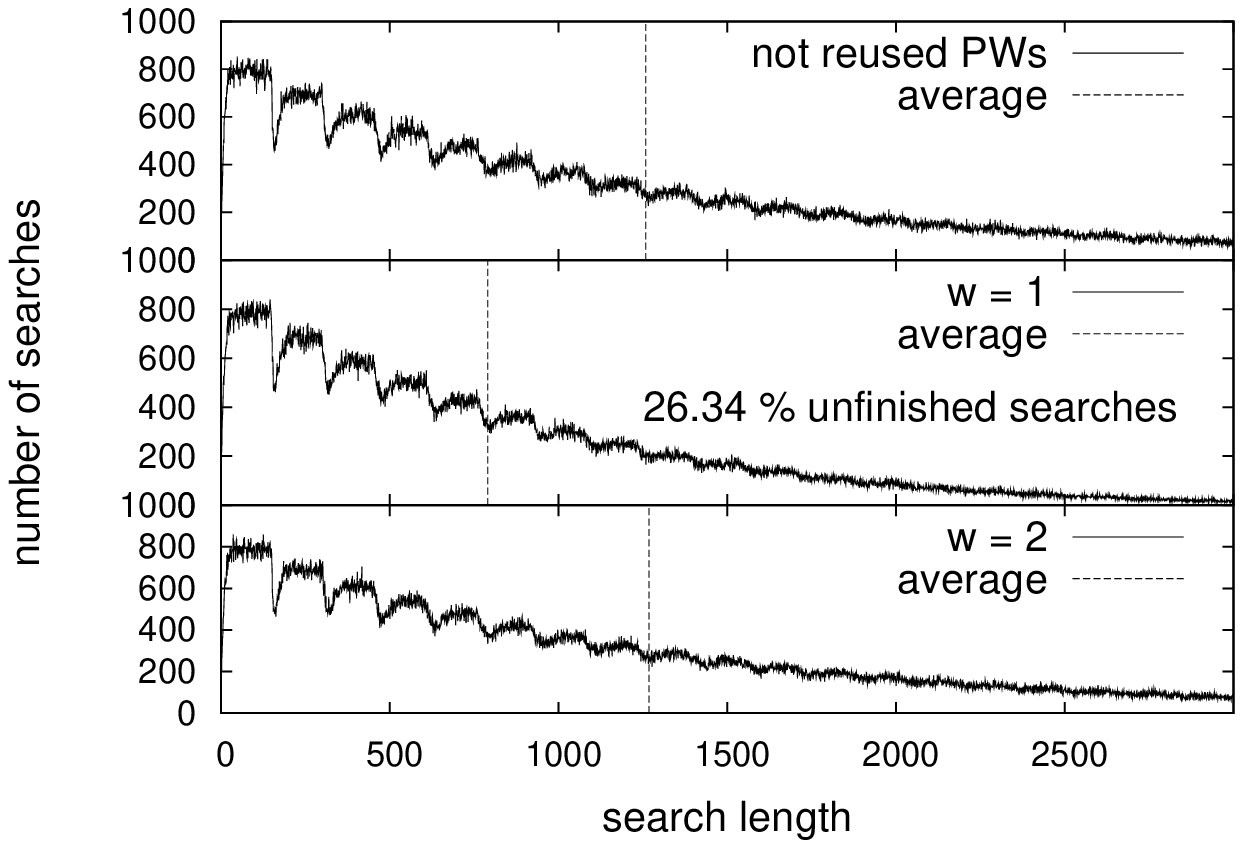}
  \label{fig:ldistr_reg10_s150_p0.1_PWinf_1_2}
 }
 \caption{Search length distributions for PWs that are not reused, for $\w=1$ and for $\w=2$, in the regular network ($\p=0,0.01,0.1$). }
 \label{fig:L_distr_reg}
\end{figure*}

\begin{figure*}
 \centering
 \subfloat[$p=0$.]{
  \includegraphics[width=\ancho cm]{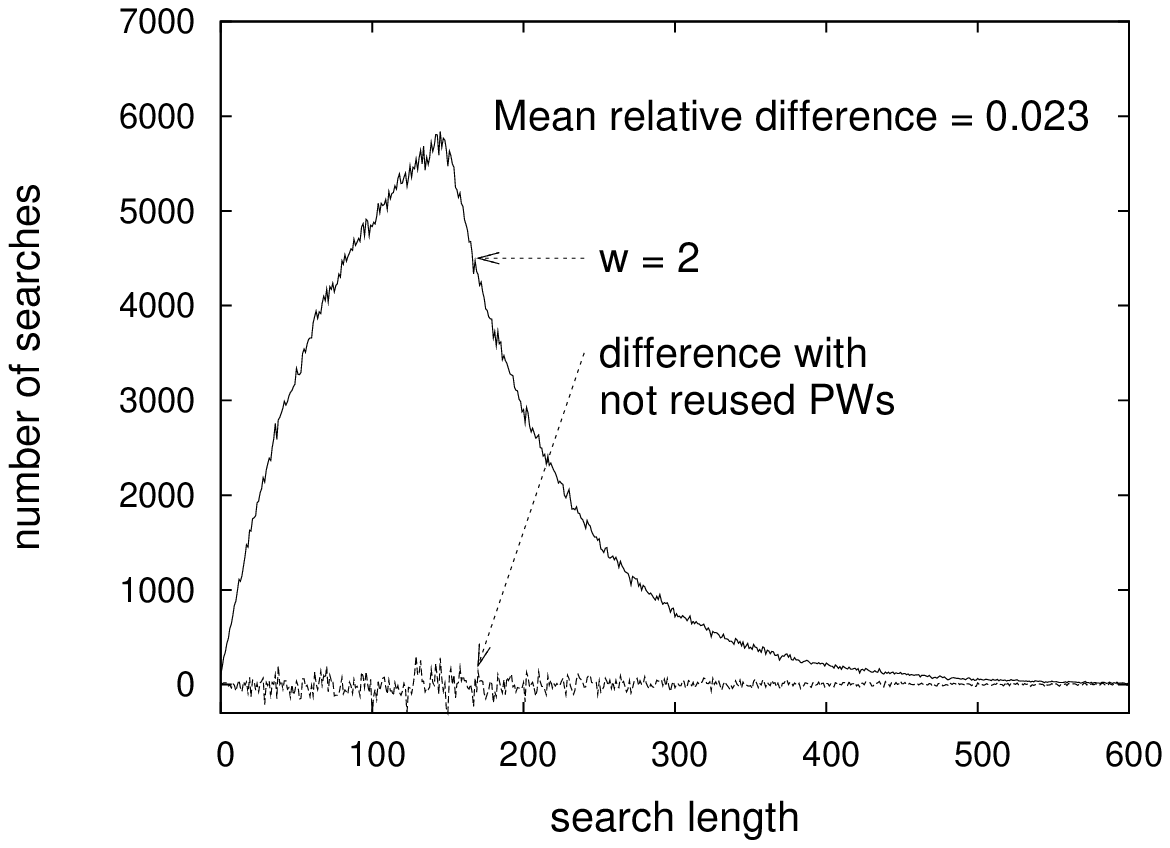}
  \label{fig:L_distr_error_reg_p0}
 }
 \subfloat[$p=0.01$.]{
  \includegraphics[width=\ancho cm]{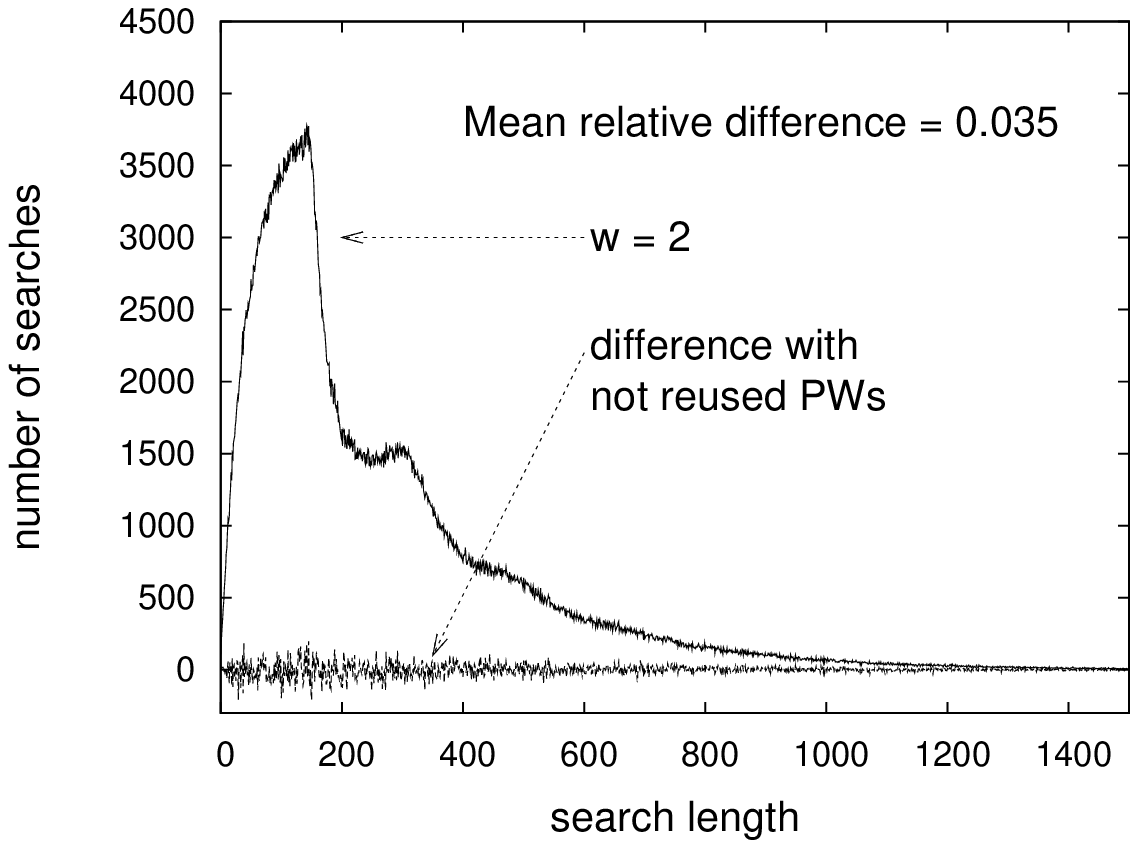}
  \label{fig:L_distr_error_reg_p0.01}
 }
\qquad
 \subfloat[$p=0.1$.]{
  \includegraphics[width=\ancho cm]{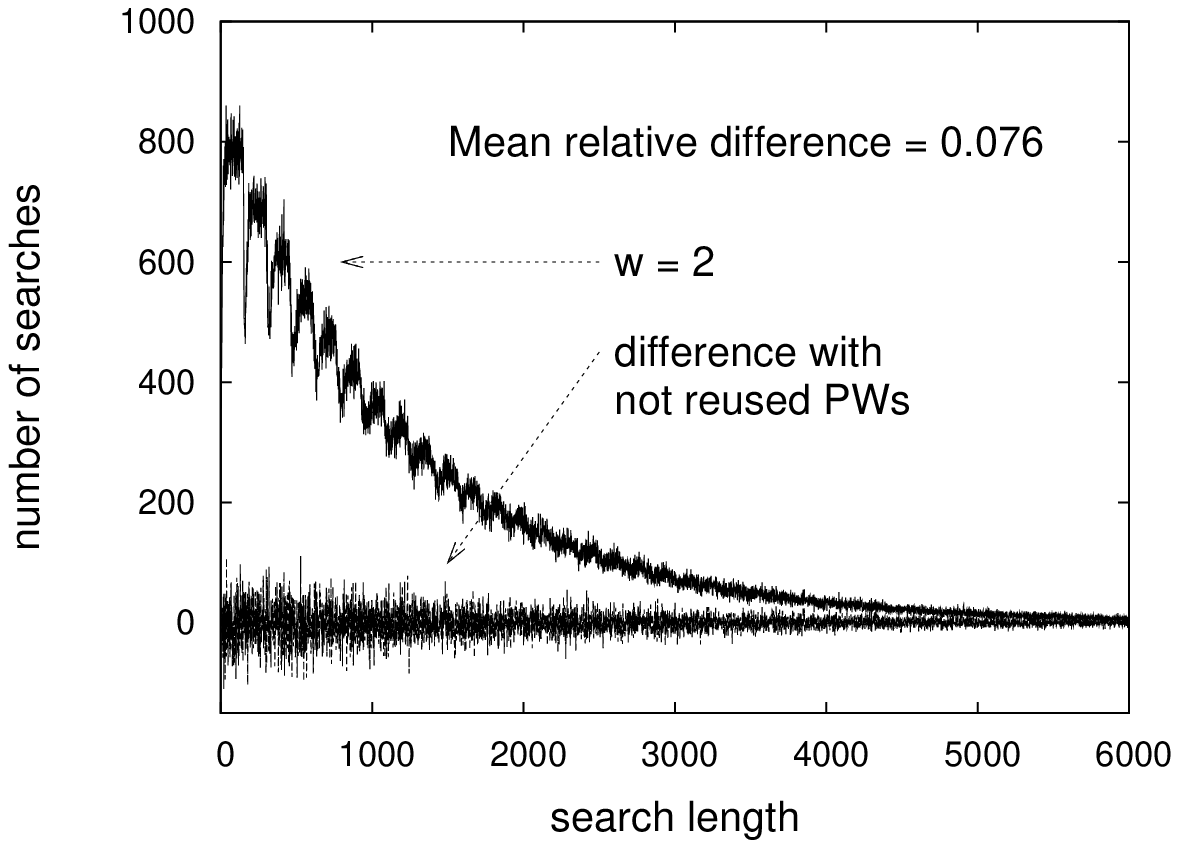}
  \label{fig:L_distr_error_reg_p0.1}
 }
 \caption{Difference between search length distributions for $\w=2$ and for not reused PWs in the regular network.}
 \label{fig:L_distr_error_reg}
\end{figure*}

We now measure the difference between the search length distributions for several values of \w\ and the base case of not reused PWs. 
In Figure~\ref{fig:L_distr_error_reg} we plot these (signed) differences for $\w=2$ and several values of $p$ in the regular network. It is observed that differences are small for low values of $p$, growing as $p$ gets bigger. But the magnitude of the differences seem to be within the order of variation of the values of the histograms for all values of $p$. As a global measure of the difference between the distributions for $\w=2$ and for PWs that are not reused we compute the \emph{mean relative difference} as
$
\frac{1}{L_{0.9}+1}\sum_{l=0}^{L_{0.9}} \frac{|h_2(l) - h_{\mathrm{af}}(l)|}{h_{\mathrm{af}}(l)},
$
where $h_w(l)$ is the number of searches with length $\ell$ when using \w\ partial walks per node, and $h_{\mathrm{af}}(l)$ corresponds to the case of not reused PWs. The tail of long searches with low frequency is removed from the calculation, since those values yield high relative differences that distort the measurement. For this, the summation includes 90\% of the searches, from length zero up to $L_{90\%}$, where $L_{90\%}$ is the 90\% percentile of search lengths. The mean relative differences for $p=0$, $p=0.01$ and $p=0.1$ are, respectively, 0.023, 0.035 and 0.076.

Therefore we conclude that, for the types of networks in our experiment, just two precomputed partial walks per node are enough to obtain searches whose lengths are statistically similar to those that would be obtained with PWs that are not reused.

\end{document}